\documentclass[11pt]{article}

\usepackage[preprint]{acl}

\usepackage{times}
\usepackage{latexsym}
\usepackage{amsmath}
\usepackage{amssymb}
\usepackage{booktabs}
\usepackage{multirow}

\usepackage{amsthm}
\usepackage{mathtools}
\usepackage{thmtools}
\usepackage{hyperref}
\usepackage{natbib}
\usepackage{enumitem}
\usepackage{xcolor}
\usepackage{algorithm}
\usepackage{algpseudocode}

\newtheorem{theorem}{Theorem}
\newtheorem{proposition}[theorem]{Proposition}
\newtheorem{corollary}[theorem]{Corollary}

\newtheorem{definition}[theorem]{Definition}

\newtheorem{remark}{Remark}

\newcommand{\ip}[2]{\langle #1, #2 \rangle}
\newcommand{\norm}[1]{\| #1 \|}
\newcommand{\Proj}[1]{P_{#1}}
\newcommand{\Sph}{\mathbb{S}}
\DeclareMathOperator{\sgn}{sgn}

\newcommand{\E}{\mathbb{E}}
\newcommand{\R}{\mathbb{R}}

\usepackage[T1]{fontenc}

\usepackage[utf8]{inputenc}
\usepackage[autostyle=false, style=english]{csquotes}
\MakeOuterQuote{"}

\usepackage{microtype}

\usepackage{inconsolata}

\usepackage{graphicx}

\title{What Is the Alignment Tax?}

\author{Robin Young \\
  Department of Computer Science and Technology \\
  University of Cambridge \\
  Cambridge, UK \\
  \texttt{robin.young@cl.cam.ac.uk}}

\begin{document}
\maketitle
\begin{abstract}
The alignment tax is widely discussed but has not been formally characterized. We provide a geometric theory of the alignment tax in representation space. Under linear representation assumptions, we define the alignment tax rate as the squared projection of the safety direction onto the capability subspace and derive the Pareto frontier governing safety-capability tradeoffs, parameterized by a single quantity of the principal angle between the safety and capability subspaces. We prove this frontier is tight and show it has a recursive structure. safety-safety tradeoffs under capability constraints are governed by the same equation, with the angle replaced by the partial correlation between safety objectives given capability directions. We derive a scaling law decomposing the alignment tax into an irreducible component determined by data structure and a packing residual that vanishes as $O(m'/d)$ with model dimension $d$, and establish conditions under which capability preservation mediates or resolves conflicts between safety objectives.
\end{abstract}

\section{Introduction}
\label{sec:intro}

The alignment tax, often intuitively invoked as the capability cost of making an AI system safe, is one of the most frequently invoked concepts in AI alignment \citep{askell2021general}. The intuition is simple: if aligning a model requires constraining its behavior, some capability may be lost. Empirically, reinforcement learning from human feedback (RLHF) \citep{christiano2017rlhf} degrades performance on reasoning benchmarks \citep{ouyang2022training}, low-rank safety fine-tuning introduces small but measurable capability costs \citep{mou2025decoupling}, and the severity of degradation varies across tasks \citep{huang2025safetytax}.

Despite its centrality, as far as we are aware, the alignment tax has no mathematical definition. Researchers use it informally to mean ``some capability was lost,'' measure it as the difference in benchmark scores before and after alignment, and propose ad hoc mitigations without a theory to guide them. No one has asked the question: \emph{what is the mathematical object that we refer to as the alignment tax?}

This is surprising. The alignment tax has been taxonomized \citep{Leike2022}, measured empirically 
\citep{ouyang2022training, lin2024mitigating, huang2025safetytax}, and mitigated algorithmically \citep{niu2026nspo, mou2025decoupling}, but it has not been given a mathematical definition from which consequences can be derived.

We attempt to answer this question. Independent of the specific modeling assumptions, we view a primary contribution as conceptual by proposing that the alignment tax admits mathematical definition and demonstrating that, once defined, it has structure rich enough to generate a theory.

Under the linear representation hypothesis, the assumption that safety and capability are encoded as linear directions in representation space \citep{park2024linear, arditi2024refusal, turner2023activation}, the alignment tax has geometric structure. It is governed by the principal angles between the safety and capability subspaces, and its shape is an elliptic Pareto frontier parameterized by a single angle $\alpha$:
\begin{equation}
\label{eq:frontier-intro}
\Delta_S = \Delta_C \cos\alpha + \sin\alpha \sqrt{B^2 - \Delta_C^2},
\end{equation}
where $\Delta_S$ is the safety gain, $\Delta_C$ is the capability change, $B$ is the perturbation budget (from the KL penalty), and $\alpha$ is the angle between the safety and capability directions. When $\alpha = 0$ (safety and capability are aligned), the tradeoff is linear and unavoidable. When $\alpha = \pi/2$ (orthogonal), the tradeoff vanishes and safety can be maximized independently of capability. We derive the exact, tight Pareto frontier for any number of capabilities (\S\ref{sec:pareto}) and define a computable tax rate $\tau = \norm{\Proj{C} v^*}^2$ that predicts per-task capability degradation from representation geometry alone. 

We show that prior independent empirical findings (null-space policy optimization, low-rank safety alignment, model averaging, heterogeneous model averaging, and anisotropic capability degradation) can be seen as special cases of the principal-angle structure. We do not claim these as evidence, since they are consistent with any model that interpolates between full tradeoff and independence via a continuous parameter. The value lies in generating quantitative predictions that go beyond existing results, such as per-task tax rates computable before alignment training, a scaling decomposition into reducible and irreducible components, and conditions under which capability preservation resolves safety-safety conflicts.

\section{Related Work}
\label{sec:related}

The term ``Alignment tax'' originated informally as a reference to the capability cost of safety. \citet{ouyang2022training} documented ``minimal regressions'' on NLP tasks after RLHF. \citet{lin2024mitigating} systematically measured Pareto fronts between alignment reward and task performance, finding model averaging achieves the best tradeoff. \citet{huang2025safetytax} showed reasoning degrades more than other capabilities. \citet{hu2025navigating} extended the tax to calibration. As far as we are aware, prior works engage with the term empirically but there has not been a mathematical definition or an analysis on the shape of the tradeoff surface.

A few recent policy optimization methods \citep{niu2026nspo, wang2026mrpo} project safety gradients into the null space of capability representations or uses SVD and null-space projection to expand reasoning capacity. \citet{mou2025decoupling} show LoRA decouples safety into an orthogonal subspace. \citet{zhong2024panacea} learn a Pareto front over multiple preference dimensions. SACPO \citep{wachi2024stepwise} formulates alignment as constrained optimization. These operate in reward/policy space rather than representation space and do not derive the geometric structure of the tradeoff surface. These methods implement special cases of a geometric formalization without a unifying theory. Our contribution is recognizing that all three operate in the same mathematical structure (principal angles between subspaces) and deriving the consequences.

The linear representation hypothesis \citep{park2024linear} posits that concepts are encoded as directions in representation space. Empirical support includes refusal directions \citep{arditi2024refusal}, steering vectors \citep{turner2023activation}, linear probes \citep{li2024inference}, and concept erasure \citep{belrose2023leace}. We adopt this hypothesis and derive its consequences for alignment tradeoffs.

Principal angles between subspaces \citep{bjorck1973numerical} and canonical correlation analysis \citep{hotelling1936relations} are classical tools. Superposition in neural networks \citep{elhage2022toymodelssuperposition} motivates the packing model. We apply these tools to the safety and capability subspaces of neural representations.

\section{Setup and Definitions}
\label{sec:setup}

\subsection{Representation Space}

Let us consider a model with representations $h \in \R^d$, where $d$ is the dimension of the residual stream or hidden state at a fixed layer. We work at a fixed layer throughout; the formalization applies independently at each layer.

\begin{definition}[Safety direction]
\label{def:safety}
The safety direction is a unit vector $v^* \in \Sph^{d-1}$ such that $\ip{v^*}{h}$ measures the safety-relevant content of representation $h$. More generally, safety may occupy a subspace $S \subseteq \R^d$ of dimension $k$; we treat the rank-1 case ($k = 1$) as the primary setting, consistent with empirical evidence that the refusal direction is approximately one-dimensional \citep{arditi2024refusal}.
\end{definition}

\begin{definition}[Capability directions]
\label{def:capability}
For each capability $i \in [m]$, the capability direction $c_i \in \Sph^{d-1}$ is defined as
\[
c_i = \frac{\nabla_h f_i(h)}{\norm{\nabla_h f_i(h)}},
\]
where $f_i : \R^d \to \R$ is a differentiable capability metric evaluated at the base model's representation. The capability subspace is $C = \mathrm{span}(c_1, \ldots, c_m)$.
\end{definition}

The gradient interpretation ensures that $c_i$ is well-defined for any differentiable capability metric and that $\ip{c_i}{\delta}$ captures the first-order effect of a representation perturbation $\delta$ on capability $i$.

\begin{definition}[Perturbation budget]
\label{def:budget}
Alignment modifies the model so that representations shift by $\delta \in \R^d$, subject to a budget constraint $\norm{\delta} \leq B$. The budget $B$ arises from the KL penalty in RLHF/DPO objectives: to first order, $\mathrm{KL}(\pi_\theta \| \pi_{\mathrm{ref}}) \approx \frac{1}{2} \delta^\top F \delta$ where $F$ is the Fisher information. The isotropic constraint $\norm{\delta} \leq B$ is the simplification obtained by replacing $F$ with its average eigenvalue; we discuss anisotropic extensions in \S\ref{sec:anisotropic}.
\end{definition}

Given these definitions, the safety gain and capability change from perturbation $\delta$ are:
\[
\Delta_S = \ip{v^*}{\delta}, \qquad \Delta_{C_i} = \ip{c_i}{\delta}.
\]

\begin{definition}[Alignment tax rate]
\label{def:tax-rate}
The \emph{alignment tax rate} for safety direction $v^*$ relative to capability subspace $C$ is
\begin{equation}
\label{eq:tax-rate}
\tau = \norm{\Proj{C} v^*}^2 \in [0, 1].
\end{equation}
When $\tau = 0$, safety is orthogonal to all capabilities (zero tax). When $\tau = 1$, safety lies entirely within the capability subspace (any safety gain requires capability loss).
\end{definition}

\section{The Pareto Frontier}
\label{sec:pareto}

\subsection{Single-Capability Frontier}
\label{sec:single-frontier}

\begin{theorem}[Single-capability Pareto frontier]
\label{thm:single-pareto}
Let $v^* \in \Sph^{d-1}$ be the safety direction and $c \in \Sph^{d-1}$ be a capability direction with angle $\alpha = \arccos(\ip{v^*}{c})$ between them. The Pareto frontier, or the maximum achievable safety gain $\Delta_S$ for a given capability change $\Delta_C$ subject to $\norm{\delta} \leq B$ is:
\begin{equation}
\label{eq:pareto}
\Delta_S = \Delta_C \cos\alpha + \sin\alpha \sqrt{B^2 - \Delta_C^2}, \quad |\Delta_C| \leq B.
\end{equation}
This frontier is tight. for each $\Delta_C \in [-B, B]$, there exists $\delta^*$ with $\norm{\delta^*} = B$ achieving equality.
\end{theorem}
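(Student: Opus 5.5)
The plan is to decompose $v^*$ along $c$ and its orthogonal complement and then reduce the problem to a two-dimensional Cauchy--Schwarz bound. Assume first $\alpha \in (0,\pi)$, so that $v^*$ and $c$ are linearly independent. Let $u = (v^* - \cos\alpha\, c)/\sin\alpha$. Then $u$ is a unit vector orthogonal to $c$, and $v^* = \cos\alpha\, c + \sin\alpha\, u$. For any perturbation $\delta$, this gives
\[
\Delta_S = \ip{v^*}{\delta} = \cos\alpha\,\ip{c}{\delta} + \sin\alpha\,\ip{u}{\delta} = \Delta_C\cos\alpha + \sin\alpha\,\ip{u}{\delta}.
\]
With $\Delta_C$ fixed, maximizing $\Delta_S$ therefore amounts to maximizing $\ip{u}{\delta}$, because $\sin\alpha \ge 0$ for $\alpha \in [0,\pi]$.

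For the upper bound, write $\delta = \Delta_C\, c + \delta_\perp$ with $\delta_\perp \perp c$. Orthogonality gives $\norm{\delta_\perp}^2 = \norm{\delta}^2 - \Delta_C^2 \le B^2 - \Delta_C^2$. This forces $|\Delta_C| \le B$, which explains the domain in the statement. Since $u \perp c$, we have $\ip{u}{\delta} = \ip{u}{\delta_\perp}$, and Cauchy--Schwarz gives $\ip{u}{\delta_\perp} \le \norm{\delta_\perp} \le \sqrt{B^2-\Delta_C^2}$. Substituting this into the displayed identity yields $\Delta_S \le \Delta_C\cos\alpha + \sin\alpha\sqrt{B^2-\Delta_C^2}$.

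For tightness, set $\delta^* = \Delta_C\, c + \sqrt{B^2-\Delta_C^2}\, u$. By orthonormality of $c$ and $u$, $\norm{\delta^*}^2 = \Delta_C^2 + (B^2 - \Delta_C^2) = B^2$. Also $\ip{c}{\delta^*} = \Delta_C$, and $\ip{u}{\delta^*} = \sqrt{B^2-\Delta_C^2}$, so the Cauchy--Schwarz step holds with equality and $\delta^*$ attains the bound.

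The only real obstacle is the degenerate case $\alpha \in \{0,\pi\}$, where $u$ is undefined. There $\sin\alpha = 0$ and $v^* = \pm c$, so $\Delta_S = \pm\Delta_C = \Delta_C\cos\alpha$ exactly, and the formula holds trivially. For tightness in this case, pick any unit $u \perp c$ (possible when $d \ge 2$) and use the same $\delta^*$ so that $\norm{\delta^*} = B$. The $u$-component then contributes nothing to $\Delta_S$, so equality still holds.
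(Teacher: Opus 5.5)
Your proof is correct, but it uses a different decomposition from the paper's. The paper puts the objective on the first axis ($v^* = e_1$, $c = \cos\alpha\,e_1 + \sin\alpha\,e_2$). It uses the capability constraint to eliminate $\delta_2 = (\Delta_C - \delta_1\cos\alpha)/\sin\alpha$, then completes the square in the budget inequality to reach $(\delta_1 - \Delta_C\cos\alpha)^2 \le \sin^2\alpha\,(B^2 - \Delta_C^2)$. You instead work in the basis $\{c, u\}$ adapted to the constraint. Fixing $\Delta_C$ pins one coordinate and leaves a ball of radius $\sqrt{B^2-\Delta_C^2}$ in $c^\perp$, and Cauchy--Schwarz finishes.

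Both routes give the same optimizer. In the paper's coordinates your $u$ is $\sin\alpha\,e_1 - \cos\alpha\,e_2$, so $\Delta_C\,c + \sqrt{B^2-\Delta_C^2}\,u$ is exactly the paper's $\delta^*$.

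What your version buys:
\begin{itemize}
\item It never divides by $\sin\alpha$ in the bound, and it treats the collinear case $\alpha \in \{0,\pi\}$ explicitly; the paper's elimination step silently excludes that case.
\item It needs no preliminary projection onto $\mathrm{span}(v^*, c)$.
\item The identity $\norm{\delta^*} = B$ follows from orthonormality rather than from a computation.
\item It is precisely the $m=1$ case of the argument the paper later uses for Theorem~\ref{thm:max-safety}. It therefore extends verbatim to a capability subspace by replacing $u$ with $\Proj{C^\perp}v^*/\norm{\Proj{C^\perp}v^*}$.
\end{itemize}

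The paper's completed-square form shows the attainable region in the $(\Delta_C,\Delta_S)$ plane directly as a filled ellipse. Your argument gives the same two-sided statement if you apply Cauchy--Schwarz as $|\ip{u}{\delta_\perp}| \le \norm{\delta_\perp}$.

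Your caveat that tightness in the degenerate case requires $d \ge 2$ (when $|\Delta_C| < B$) is correct, and it is a point the statement itself glosses over.
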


\begin{proof}
Project $\delta$ onto $\mathrm{span}(v^*, c)$; components orthogonal to both contribute to neither objective. Write $v^* = e_1$, $c = \cos\alpha \, e_1 + \sin\alpha \, e_2$ in this plane. Then $\Delta_S = \delta_1$, $\Delta_C = \delta_1 \cos\alpha + \delta_2 \sin\alpha$, with $\delta_1^2 + \delta_2^2 \leq B^2$. Solving for $\delta_2 = (\Delta_C - \delta_1 \cos\alpha)/\sin\alpha$ and substituting into the budget constraint yields $(\delta_1 - \Delta_C \cos\alpha)^2 \leq \sin^2\alpha (B^2 - \Delta_C^2)$, giving $\delta_1 \leq \Delta_C \cos\alpha + \sin\alpha \sqrt{B^2 - \Delta_C^2}$. Equality is achieved when $\delta^* = (\Delta_C \cos\alpha + \sin\alpha\sqrt{B^2 - \Delta_C^2}) \, e_1 + (\Delta_C - \delta_1^* \cos\alpha)/\sin\alpha \, e_2$, which saturates the budget: $\norm{\delta^*} = B$.
\end{proof}

\begin{remark}[Limiting cases]
When $\alpha = 0$: $\Delta_S = \Delta_C$ (linear tradeoff). When $\alpha = \pi/2$: $\Delta_S = \sqrt{B^2 - \Delta_C^2}$ (quarter circle independent optimization). The frontier continuously interpolates between these extremes.
\end{remark}

\subsection{Maximum Safety at Fixed Capability}
\label{sec:max-safety}

\begin{theorem}[Maximum safety gain under capability constraint]
\label{thm:max-safety}
Let $C$ be the capability subspace (dimension $m$) and $v^*$ the safety direction. For a fixed capability constraint $\Proj{C}\delta = \delta_C^*$, the maximum safety gain subject to $\norm{\delta} \leq B$ is:
\begin{equation}
\label{eq:max-safety}
\Delta_S^{\max} = \ip{v^*}{\delta_C^*} + \sqrt{B^2 - \norm{\delta_C^*}^2} \cdot \norm{\Proj{C^\perp} v^*},
\end{equation}
achieved by $\delta^* = \delta_C^* + \sqrt{B^2 - \norm{\delta_C^*}^2} \cdot \Proj{C^\perp} v^* / \norm{\Proj{C^\perp} v^*}$.
\end{theorem}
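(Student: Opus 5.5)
The plan is to decompose the perturbation orthogonally along $C \oplus C^\perp$ and reduce the problem to a Cauchy--Schwarz bound on the free component, much as the proof of Theorem~\ref{thm:single-pareto} reduced to a planar problem. Write $\delta = \Proj{C}\delta + \Proj{C^\perp}\delta = \delta_C^* + \eta$ with $\eta \in C^\perp$. By Pythagoras, the budget $\norm{\delta}\le B$ becomes $\norm{\eta}^2 \le B^2 - \norm{\delta_C^*}^2$. I will assume $\norm{\delta_C^*}\le B$, since otherwise the constraint set is empty. The safety gain splits as
\[
\Delta_S = \ip{v^*}{\delta_C^*} + \ip{v^*}{\eta}.
\]
The first term is fixed by the constraint. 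For the second, because $\eta \in C^\perp$ and $\Proj{C^\perp}$ is self-adjoint and idempotent, we have $\ip{v^*}{\eta} = \ip{\Proj{C^\perp} v^*}{\eta}$.

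Next I apply Cauchy--Schwarz:
\[
\ip{\Proj{C^\perp} v^*}{\eta} \le \norm{\Proj{C^\perp} v^*}\,\norm{\eta} \le \norm{\Proj{C^\perp} v^*}\sqrt{B^2 - \norm{\delta_C^*}^2}.
\]
Adding the fixed term gives the upper bound \eqref{eq:max-safety}. For attainment, take $\eta$ to be the positive multiple of $\Proj{C^\perp}v^*$ with norm $\sqrt{B^2-\norm{\delta_C^*}^2}$. This choice lies in $C^\perp$, so $\Proj{C}\delta^* = \delta_C^*$ and the capability constraint holds. It saturates the budget, and it gives equality in Cauchy--Schwarz. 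This is exactly the stated $\delta^*$.

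The only real obstacle is the degenerate case $\Proj{C^\perp}v^* = 0$, that is, $\tau = 1$ in Definition~\ref{def:tax-rate}. In that case the stated maximizer divides by zero. The resolution is that the bound collapses to $\Delta_S^{\max} = \ip{v^*}{\delta_C^*}$, and this value is achieved by $\delta = \delta_C^*$ itself, or by any admissible $\eta$. I would note this case separately. As a sanity check, I would also observe that when $m=1$ and $\delta_C^* = \Delta_C\, c$, the formula reduces to $\Delta_C\cos\alpha + \sin\alpha\sqrt{B^2-\Delta_C^2}$. This recovers Theorem~\ref{thm:single-pareto}, because $\norm{\Proj{C^\perp}v^*} = \sin\alpha$.
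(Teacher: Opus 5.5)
Your proof is correct and follows the paper's argument: decompose $\delta = \delta_C^* + \delta_\perp$ with $\delta_\perp \in C^\perp$ and $\norm{\delta_\perp} \le \sqrt{B^2 - \norm{\delta_C^*}^2}$, then maximize $\ip{\Proj{C^\perp}v^*}{\delta_\perp}$ by aligning $\delta_\perp$ with $\Proj{C^\perp}v^*$. You also make two points the paper leaves implicit: feasibility requires $\norm{\delta_C^*} \le B$, and in the degenerate case $\tau = 1$ the stated maximizer is undefined while the maximum is simply $\ip{v^*}{\delta_C^*}$.
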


\begin{proof}
Decompose $\delta = \delta_C^* + \delta_\perp$ with $\delta_\perp \in C^\perp$ and $\norm{\delta_\perp} \leq B' := \sqrt{B^2 - \norm{\delta_C^*}^2}$. Then $\ip{v^*}{\delta} = \ip{v^*}{\delta_C^*} + \ip{\Proj{C^\perp} v^*}{\delta_\perp}$. The second term is maximized when $\delta_\perp = B' \, \Proj{C^\perp}v^* / \norm{\Proj{C^\perp}v^*}$, giving $B' \norm{\Proj{C^\perp}v^*}$.
\end{proof}

Notice that the quantity $\norm{\Proj{C^\perp} v^*} = \sqrt{1 - \tau}$ measures the component of safety outside the capability subspace; or, the ``room'' available for free safety improvement.

\subsection{Tax Rate Properties}
\label{sec:tax-rate}

\begin{corollary}[Tax-free safety]
\label{cor:tax-free}
The maximum safety achievable at zero capability cost ($\delta_C^* = 0$) is:
\begin{equation}
\Delta_S^{\mathrm{free}} = B\sqrt{1 - \tau}.
\end{equation}
When $\tau \ll 1$, almost the full budget is available for tax-free safety improvement.
\end{corollary}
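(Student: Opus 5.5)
The plan is to specialize Theorem~\ref{thm:max-safety} to the case $\delta_C^* = 0$ and then rewrite $\norm{\Proj{C^\perp} v^*}$ in terms of the tax rate $\tau$ from Definition~\ref{def:tax-rate}.

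\textbf{Step 1: specialize the theorem.} Setting $\delta_C^* = 0$ means $\Proj{C}\delta = 0$, so every capability change $\Delta_{C_i} = \ip{c_i}{\delta}$ vanishes. This is exactly the zero-capability-cost feasible set. Equation~\eqref{eq:max-safety} then reduces to
\[
\Delta_S^{\mathrm{free}} = \ip{v^*}{0} + \sqrt{B^2 - 0}\,\norm{\Proj{C^\perp} v^*} = B\,\norm{\Proj{C^\perp} v^*}.
\]
The maximizer is $\delta^* = B\,\Proj{C^\perp}v^*/\norm{\Proj{C^\perp}v^*}$.

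\textbf{Step 2: convert to $\tau$.} Because $\Proj{C}$ and $\Proj{C^\perp}$ are complementary orthogonal projections, the Pythagorean identity applies to the unit vector $v^*$:
\[
1 = \norm{v^*}^2 = \norm{\Proj{C}v^*}^2 + \norm{\Proj{C^\perp}v^*}^2 = \tau + \norm{\Proj{C^\perp}v^*}^2.
\]
Hence $\norm{\Proj{C^\perp}v^*} = \sqrt{1-\tau}$, and $\Delta_S^{\mathrm{free}} = B\sqrt{1-\tau}$. The remark about $\tau \ll 1$ follows because $\sqrt{1-\tau} = 1 - \tau/2 + O(\tau^2)$, which is close to $1$.

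\textbf{Main obstacle: the degenerate case $\tau = 1$.} Here $\Proj{C^\perp}v^* = 0$, so the maximizer in Theorem~\ref{thm:max-safety} divides by zero. I would handle it directly rather than through the theorem. Any $\delta \in C^\perp$ satisfies $\ip{v^*}{\delta} = \ip{\Proj{C^\perp}v^*}{\delta} = 0$, so the maximum is $0 = B\sqrt{1-1}$ and the formula still holds.

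A second subtlety is a mismatch in what "zero capability cost" means. Taken literally it is $\Delta_{C_i} = 0$ for all $i$. That condition is equivalent to $\delta \perp c_i$ for every $i$, i.e.\ $\delta \in C^\perp$, i.e.\ $\Proj{C}\delta = 0$. So it coincides with the constraint $\delta_C^* = 0$, and the specialization in Step 1 is legitimate.
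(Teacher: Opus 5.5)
Your proposal is correct and follows the paper's implicit argument: specialize Theorem~\ref{thm:max-safety} to $\delta_C^* = 0$, then use the Pythagorean identity $\norm{\Proj{C^\perp} v^*} = \sqrt{1-\tau}$, which the paper notes just before the corollary. Your direct treatment of the degenerate case $\tau = 1$, where the theorem's maximizer divides by zero, is a small genuine improvement over the paper.
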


\begin{corollary}[Anisotropy]
\label{cor:anisotropy}
For individual capability direction $c_i \in \Sph^{d-1}$, the per-task tax rate is:
\begin{equation}
\tau_i = \ip{v^*}{c_i}^2.
\end{equation}
Different capabilities have different tax rates depending on their angle to safety. This quantity is directly measurable via probing.
\end{corollary}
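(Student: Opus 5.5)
The plan is to specialize Definition~\ref{def:tax-rate} to the one-dimensional capability subspace $C_i = \mathrm{span}(c_i)$. Since $c_i \in \Sph^{d-1}$ is a unit vector, the orthogonal projection onto $C_i$ is $\Proj{C_i} x = \ip{c_i}{x}\, c_i$. Applying this to $v^*$ gives
\[
\Proj{C_i} v^* = \ip{c_i}{v^*}\, c_i .
\]
Using $\norm{c_i} = 1$ again, its squared norm is
\[
\tau_i = \norm{\Proj{C_i} v^*}^2 = \ip{v^*}{c_i}^2 .
\]
This equals $\cos^2\alpha_i$, where $\alpha_i$ is the angle of Theorem~\ref{thm:single-pareto} for the pair $(v^*, c_i)$. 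This links the per-task rate to the single-capability frontier: $\alpha_i$ is exactly the parameter that shapes the frontier for task $i$.

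For the interpretive part, I would invoke Corollary~\ref{cor:tax-free} with $C = C_i$. The tax-free safety gain available while leaving capability $i$ untouched to first order is $B\sqrt{1-\tau_i} = B\,\lvert\sin\alpha_i\rvert$. Hence capabilities with different angles to $v^*$ incur genuinely different costs, which is the content of "anisotropy." To make this explicit, I would note that $\tau_i$ is not constant across $i$ unless all the $c_i$ make the same angle with $v^*$. I would also note that $\tau_i \le \tau$ for the full subspace $C$, since $C_i \subseteq C$ and projection onto a larger subspace cannot decrease the norm.

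Measurability is a remark rather than a mathematical claim. Both $v^*$ (for instance, a refusal direction or probe) and $c_i$ (the normalized gradient of $f_i$ in Definition~\ref{def:capability}) can be estimated from the base model, so $\tau_i$ is a computable inner product.

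There is no real obstacle in this argument. The only point needing care is the sign convention. Squaring removes any dependence on the orientation of $c_i$, so $\tau_i$ is well-defined even though $c_i$ is determined only up to the sign convention of $f_i$.
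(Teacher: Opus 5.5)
Your proposal is correct and takes the same route as the paper: the paper gives no separate proof and treats this corollary as the immediate specialization of Definition~\ref{def:tax-rate} to $C = \mathrm{span}(c_i)$, where $\Proj{C_i}v^* = \ip{c_i}{v^*}\,c_i$ gives $\tau_i = \ip{v^*}{c_i}^2 = \cos^2\alpha_i$. Your side remarks ($\tau_i \le \tau$, the tax-free gain $B\sqrt{1-\tau_i}$, sign invariance) are sound, with one small correction: equal $\tau_i$ only forces equal $|\cos\alpha_i|$, so supplementary angles $\alpha$ and $\pi-\alpha$ also give the same tax rate.
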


\begin{remark}[Non-additivity]
The multi-capability tax rate satisfies $\tau = (C^\top v^*)^\top (C^\top C)^{-1} (C^\top v^*)$ where $C = [c_1 | \cdots | c_m]$, which reduces to $\sum_i \tau_i$ only when capabilities are orthonormal. In general, capabilities may overlap, and the joint tax can be less than the sum of individual taxes.
\end{remark}

\begin{corollary}[Negative tax]
\label{cor:negative}
If $\ip{v^*}{\delta_C^*} > 0$, that is, the capability target has positive projection onto 
safety, then the safety gain exceeds what the residual budget alone would provide:
\begin{align}
\Delta_S^{\max} &= \langle v^*,\, \delta_C^* \rangle 
  + \sqrt{B^2 - \|\delta_C^*\|^2}\,\sqrt{1-\tau} \notag \\
&> \sqrt{B^2 - \|\delta_C^*\|^2}\,\sqrt{1-\tau}.
\end{align}
In particular, when the capability target is small ($\norm{\delta_C^*} \ll B$) and 
well-aligned with safety ($\ip{v^*}{\delta_C^*} \approx \norm{\delta_C^*}$), the total 
safety gain exceeds the zero-capability-cost amount $B\sqrt{1-\tau}$:
\[
\Delta_S^{\max} \approx \norm{\delta_C^*} + B\sqrt{1-\tau} > B\sqrt{1-\tau}.
\]
Capability improvement subsidizes safety when the projection $\ip{v^*}{\delta_C^*}$ is 
large enough to compensate for the reduced residual budget.
\end{corollary}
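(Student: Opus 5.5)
The plan is to read the corollary directly off Theorem~\ref{thm:max-safety}, rewriting $\norm{\Proj{C^\perp} v^*}$ through the tax rate. Since $v^*$ is a unit vector and $\Proj{C}$, $\Proj{C^\perp}$ are complementary orthogonal projections, Pythagoras gives $\norm{\Proj{C^\perp} v^*}^2 = 1 - \norm{\Proj{C} v^*}^2 = 1-\tau$. Substituting this into \eqref{eq:max-safety} yields the displayed equality
\[
\Delta_S^{\max} = \ip{v^*}{\delta_C^*} + \sqrt{B^2 - \norm{\delta_C^*}^2}\,\sqrt{1-\tau}.
\]
Feasibility requires $\norm{\delta_C^*} \le B$, so the square root is real. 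The strict inequality is then immediate from the hypothesis $\ip{v^*}{\delta_C^*} > 0$: I add a positive number to the nonnegative residual term.

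For the ``in particular'' clause, I would make the approximation explicit rather than leave it heuristic. Expand
\[
\sqrt{B^2 - \norm{\delta_C^*}^2} = B - \frac{\norm{\delta_C^*}^2}{2B} + O\!\left(\norm{\delta_C^*}^4/B^3\right).
\]
Combining this with $\ip{v^*}{\delta_C^*} = \norm{\delta_C^*} - \epsilon$ gives
\[
\Delta_S^{\max} = \norm{\delta_C^*} + B\sqrt{1-\tau} - \epsilon - O\!\left(\norm{\delta_C^*}^2/B\right).
\]
The error terms are second order in $\norm{\delta_C^*}/B$, or controlled by the alignment defect $\epsilon$.

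To get a non-asymptotic version of ``exceeds $B\sqrt{1-\tau}$'', I would use the elementary bound
\[
B - \sqrt{B^2 - x^2} \le \frac{x^2}{\sqrt{B^2-x^2}} \le \frac{x^2}{B}\quad\text{when } x \le B/\sqrt{2}, \qquad x := \norm{\delta_C^*}.
\]
Since $\sqrt{1-\tau} \le 1$, it follows that $\Delta_S^{\max} - B\sqrt{1-\tau} \ge \ip{v^*}{\delta_C^*} - x^2/B$. This is positive whenever $\ip{v^*}{\delta_C^*} > \norm{\delta_C^*}^2/B$. That condition is the precise content of the final sentence, ``large enough to compensate for the reduced residual budget'', and under the stated regime it holds because the left side is about $x$ while the right side is about $x^2/B \ll x$.

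The only delicate point is this last step, where the statement is approximate. The main work is choosing the sufficient condition $\ip{v^*}{\delta_C^*} > \norm{\delta_C^*}^2/B$ and verifying the square-root bound. Everything before that is a one-line substitution into Theorem~\ref{thm:max-safety}.
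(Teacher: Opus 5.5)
Your derivation of the displayed equality and the strict inequality is what the paper intends. The corollary has no separate proof; it is read off Theorem~\ref{thm:max-safety} using $\norm{\Proj{C^\perp} v^*} = \sqrt{1-\tau}$, which the paper notes right after that theorem, and positivity of $\ip{v^*}{\delta_C^*}$ gives strictness.

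Your quantitative treatment of the ``in particular'' clause goes beyond the paper, which leaves it heuristic. However, your chain of inequalities contains a reversed step. Since $\sqrt{B^2-x^2} \le B$, you have $x^2/\sqrt{B^2-x^2} \ge x^2/B$, not $\le$. Under $x \le B/\sqrt{2}$ the middle term only yields $\sqrt{2}\,x^2/B$.

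The endpoint you need is nevertheless true, for every $0 \le x \le B$, by rationalizing: $B - \sqrt{B^2-x^2} = x^2/(B+\sqrt{B^2-x^2}) \le x^2/B$. So drop the middle term and the restriction $x \le B/\sqrt{2}$, and your sufficient condition $\ip{v^*}{\delta_C^*} > \norm{\delta_C^*}^2/B$ stands.

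That condition is sufficient rather than the ``precise'' content of the last sentence. The exact compensation condition is $\ip{v^*}{\delta_C^*} > \sqrt{1-\tau}\,x^2/(B+\sqrt{B^2-x^2})$ with $x = \norm{\delta_C^*}$.

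Finally, since $\delta_C^* \in C$, Cauchy--Schwarz gives $\ip{v^*}{\delta_C^*} = \ip{\Proj{C}v^*}{\delta_C^*} \le \sqrt{\tau}\,\norm{\delta_C^*}$. So the regime $\ip{v^*}{\delta_C^*} \approx \norm{\delta_C^*}$ invoked in the statement, and in your ``left side is about $x$'' remark, actually forces $\tau \approx 1$. Your formulation is the more useful one. It is satisfiable for any $\tau > 0$ by taking $\delta_C^*$ parallel to $\Proj{C}v^*$ with $\norm{\delta_C^*} < \sqrt{\tau}\,B$.
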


\subsection{Anisotropic Budget Extension}
\label{sec:anisotropic}

When the Fisher information $F$ is not a scalar multiple of the identity, the budget set is an ellipsoid $\delta^\top F \delta \leq B^2$. The analysis generalizes by working in the whitened coordinate system $\tilde{\delta} = F^{1/2}\delta$, $\tilde{v}^* = F^{-1/2}v^* / \norm{F^{-1/2}v^*}$, $\tilde{c}_i = F^{-1/2}c_i / \norm{F^{-1/2}c_i}$, where the budget becomes $\norm{\tilde{\delta}} \leq B$. All results carry over with modified angles $\tilde{\alpha}$ computed in the whitened space. The qualitative conclusions of tradeoffs governed by subspace angles, tax rate determined by projection are invariant to the budget shape.

\section{Scaling Law for the Alignment Tax}
\label{sec:scaling}

How does the alignment tax $\tau$ behave as the model dimension $d$ increases? We provide an answer under a structured packing model.

\subsection{Feature Packing Model}

Assume the model represents $N$ features as unit vectors $f_1, \ldots, f_N \in \Sph^{d-1}$, typically with $N > d$ (superposition regime; \citealt{elhage2022toymodelssuperposition}). Safety is one feature $v^* = f_s$ and capabilities are features $c_i = f_{\sigma(i)}$.

\begin{definition}[Intrinsic overlap]
Feature pair $(f_i, f_j)$ has intrinsic overlap $\gamma_{ij}$ if, in the limit $d \to \infty$, the optimal representation satisfies $\ip{f_i}{f_j} \to \gamma_{ij}$. This is determined by the co-occurrence structure and task requirements of the data.
\end{definition}

\begin{definition}[Packing residual]
The packing residual is $\epsilon_{ij} = \ip{f_i}{f_j} - \gamma_{ij}$: the overlap in excess of the intrinsic value, forced by finite dimensionality.
\end{definition}

\begin{definition}[Coherence]
The coherence of the packing is $\mu = \max_{(i,j): \gamma_{ij} = 0} |\ip{f_i}{f_j}|$: the maximum packing residual among intrinsically unrelated features.
\end{definition}

Standard bounds give $\mu = O(\sqrt{\log N / d})$ for random packings and $\mu \geq \sqrt{(N-d)/(d(N-1))}$ by the \citet{welch1974bounds} bound. Equiangular tight frames achieve the Welch bound when they exist.

\subsection{Scaling Theorem}

Let $I \subset [m]$ denote capabilities with nonzero intrinsic overlap to safety ($\gamma_i \neq 0$), and $m' = m - |I|$ those with only incidental overlap.

\begin{theorem}[Scaling law for alignment tax]
\label{thm:scaling}
Assume the capability directions satisfy near-orthogonality ($m\mu < 1$). Let $\bar{\gamma} = \max_{i \in I} |\gamma_i|$. Then:
\begin{equation}
\tau = \tau_0 + R(d),
\end{equation}
where $\tau_0 = \sum_{i \in I} \gamma_i^2$ is the irreducible tax and the packing residual satisfies:
\begin{equation}
|R(d)| \leq \frac{\tau_0 m\mu + m'\mu^2 + 2\bar{\gamma}|I|\mu + |I|\mu^2}{1 - m\mu}.
\end{equation}
Under random packing ($\mu^2 = O(\log N / d)$ for $N = d^p$):
\begin{equation}
\label{eq:scaling-random}
|R(d)| = O\!\left(\frac{m'\log N}{d} + \frac{\bar{\gamma}|I|\sqrt{\log N}}{\sqrt{d}}\right).
\end{equation}
\end{theorem}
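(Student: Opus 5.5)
We plan to work from the Gram-matrix formula in the Non-additivity remark, $\tau = b^\top G^{-1} b$, where $b = C^\top v^*$ has entries $b_i = \ip{v^*}{c_i}$ and $G = C^\top C$ is the Gram matrix of the capability directions. Under the packing model we write $b_i = \gamma_i + \epsilon_i$. For $i \in I$ we have $|\epsilon_i| \le \mu$ by the coherence bound applied to the residual. For $i \notin I$ we have $\gamma_i = 0$, so $|b_i| \le \mu$ directly from the definition of $\mu$. The diagonal of $G$ is $1$ and its off-diagonal entries are bounded by $\mu$ in magnitude; here we assume the capability directions are intrinsically unrelated to one another, which is what ``near-orthogonality'' encodes. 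So $G = I_m + E$ with $\norm{E}_{\mathrm{op}} \le \norm{E}_\infty \le (m-1)\mu < m\mu < 1$.

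The first step is to control $G^{-1}$. Writing $G^{-1} = I - E G^{-1}$, or using the Neumann series, gives
\[
|b^\top G^{-1} b - \norm{b}^2| = |b^\top E G^{-1} b| \le \frac{m\mu}{1-m\mu}\norm{b}^2,
\]
since $\norm{G^{-1}}_{\mathrm{op}} \le 1/(1-m\mu)$.

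The second step expands $\norm{b}^2$ as
\[
\sum_{i\in I}(\gamma_i+\epsilon_i)^2 + \sum_{i\notin I} b_i^2 = \tau_0 + \sum_{i\in I}(2\gamma_i\epsilon_i + \epsilon_i^2) + \sum_{i \notin I} b_i^2 .
\]
This gives $|\norm{b}^2 - \tau_0| \le 2\bar\gamma|I|\mu + |I|\mu^2 + m'\mu^2$.

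The third step combines the two. We write
\[
\tau - \tau_0 = (\norm{b}^2 - \tau_0) + (\tau - \norm{b}^2)
\]
and bound the second term by $\tfrac{m\mu}{1-m\mu}(\tau_0 + \text{residual})$. Putting everything over the common denominator $1-m\mu$, the residual terms pick up a factor $(1 - m\mu) + m\mu = 1$, and we obtain exactly the stated numerator $\tau_0 m\mu + m'\mu^2 + 2\bar\gamma|I|\mu + |I|\mu^2$. The algebra here should be routine, but we will check it carefully against the displayed constant.

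The random-packing rate then follows by substitution. We use $\mu = O(\sqrt{\log N/d})$, so that $m'\mu^2 = O(m'\log N/d)$ and $\bar\gamma|I|\mu = O(\bar\gamma|I|\sqrt{\log N/d})$. The terms $|I|\mu^2$ and $\tau_0 m\mu$ must also be absorbed. Since $\tau_0 \le \bar\gamma^2|I|$, the term $\tau_0 m\mu$ is $O(\bar\gamma|I|\sqrt{\log N/d})$ provided $m\bar\gamma = O(1)$, or more loosely, treating $m$ as a constant. For $d$ large, $1-m\mu$ is bounded below by $1/2$.

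The main obstacle is justifying the entrywise bounds on $G$ and $b$ from the packing-model definitions. Coherence only controls pairs with $\gamma_{ij}=0$, so we must assume explicitly that distinct capabilities have zero intrinsic overlap, and that for $i \in I$ the residual $\epsilon_i$ is itself at most $\mu$ (i.e.\ coherence bounds residuals generally). A secondary obstacle is absorbing $\tau_0 m\mu$ into the $O(\cdot)$ expression, which requires treating $m$ or $\bar\gamma m$ as bounded. We would state both assumptions explicitly rather than hide them in constants.
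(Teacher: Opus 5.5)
Your proposal is correct and follows the same route as the paper: a Gershgorin-type bound on $C^\top C$ giving $\tau = (1+O(m\mu))\norm{C^\top v^*}^2$, then expanding $\norm{C^\top v^*}^2$ over $I$ and its complement and bounding each term with $|\epsilon_i|\le\mu$, and your combination over the common denominator $1-m\mu$ does recover the stated numerator exactly. The assumptions you flag are exactly what the paper's one-line proof leaves implicit: zero intrinsic overlap among distinct capabilities, $|\epsilon_i|\le\mu$ also for $i\in I$, and bounded $m$ or $m\bar\gamma$ so that $\tau_0 m\mu$ is absorbed in the random-packing rate. The last one is genuinely needed, because when $\tau_0>0$ and $m'$ grows, $\tau_0 m'\mu$ is neither $O(m'\log N/d)$ nor $O(\bar\gamma|I|\sqrt{\log N/d})$.
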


\begin{proof}
Under near-orthogonality, $\norm{\Proj{C}v^*}^2 = (1 + O(m\mu))\norm{C^\top v^*}^2$ by Gershgorin's theorem applied to $C^\top C$. Decompose: $\norm{C^\top v^*}^2 = \sum_{i \in I}(\gamma_i + \epsilon_i)^2 + \sum_{i \notin I}\epsilon_i^2$. Expand the first sum: $\sum_{i \in I}\gamma_i^2 + 2\sum_{i \in I}\gamma_i\epsilon_i + \sum_{i \in I}\epsilon_i^2$. Bound each term using $|\epsilon_i| \leq \mu$.
\end{proof}

\begin{proposition}[Tight version under random packing]
\label{prop:tight-scaling}
If incidental capability directions are independent and uniform on $\Sph^{d-1}$, then:
\begin{equation}
\tau = \tau_0 + \frac{m'}{d} + O\!\left(\frac{\bar{\gamma}\sqrt{|I|\log d}}{\sqrt{d}} + \frac{\sqrt{m'} + \log d}{d}\right)
\end{equation}
with high probability. The leading packing term $m'/d$ is exact in expectation.
\end{proposition}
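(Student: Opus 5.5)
We write $\tau=\norm{\Proj{C}v^*}^2=g^\top G^{-1}g$, where $g=C^\top v^*\in\R^m$ and $G=C^\top C$ is the Gram matrix, as in the non-additivity remark. We then separate the intrinsic and incidental parts of the problem. We take the intrinsic capabilities $c_i$, $i\in I$, as given (deterministic, or at least independent of the incidental ones). The $m'$ incidental directions $u_1,\dots,u_{m'}$ are i.i.d.\ uniform on $\Sph^{d-1}$ and independent of $v^*$ and of $C_I$.

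The cleanest route is sequential projection. Set $C_I=\mathrm{span}\{c_i\}_{i\in I}$. Then
\[
\norm{\Proj{C}v^*}^2=\norm{\Proj{C_I}v^*}^2+\norm{\Proj{W}r}^2,
\]
where $r=\Proj{C_I^\perp}v^*$ and $W=\Proj{C_I^\perp}\mathrm{span}(u_j)$.

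\textbf{First term.} We expand $\norm{\Proj{C_I}v^*}^2=g_I^\top G_I^{-1}g_I$, which is exactly the setting of Theorem~\ref{thm:scaling} restricted to $I$. With $g_i=\gamma_i+\epsilon_i$ this gives $\tau_0+2\sum_{i\in I}\gamma_i\epsilon_i+O(|I|\mu^2+\tau_0|I|\mu)$. For the tight version we want fluctuations rather than worst-case bounds. We treat the residuals $\epsilon_i$ for $i\in I$ as random-packing-sized, i.e.\ sub-Gaussian at scale $1/\sqrt d$, and apply Cauchy--Schwarz plus a union bound:
\[
\Big|\sum_{i\in I}\gamma_i\epsilon_i\Big|\le \bar\gamma\sqrt{|I|}\max_i|\epsilon_i|=O\!\left(\bar\gamma\sqrt{|I|\log d/d}\right).
\]
This is the stated cross term. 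The $|I|\mu^2$ term is $O(|I|\log d/d)$, which we absorb by treating $|I|$ as $O(1)$ or folding it into the $\log d/d$ error.

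\textbf{Second term.} Condition on $C_I$. The vector $r$ is fixed with $\norm r^2=1-\norm{\Proj{C_I}v^*}^2$. Because the $u_j$ are rotation-invariant, the $m'$-dimensional span of the $u_j$ is a uniformly random subspace of dimension $m'$ almost surely. Its projection $W$ into $C_I^\perp$ is not exactly uniform, but $C_I$ has only $|I|$ dimensions. So we compare $W$ with a Haar-random $m'$-subspace of the $(d-|I|)$-dimensional space $C_I^\perp$, and that comparison costs $O(m'|I|/d^2)$.

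For a Haar-random $m'$-subspace $W$ of $\R^{n}$ and a fixed unit vector, the quantity $\norm{\Proj{W}\hat r}^2$ is distributed as $\mathrm{Beta}(m'/2,(n-m')/2)$. Its mean is exactly $m'/n$, which gives the "exact in expectation" claim: $\E\,\norm{\Proj W r}^2=\norm r^2\,m'/d$. The difference $\norm r^2 m'/d-m'/d=-\tau_0 m'/d+\dots$ should be absorbed into the error, provided $\tau_0m'/d$ is counted as lower order; this is an honest caveat to check. The Beta variance is $O(m'/d^2)$, so a Beta/chi-square concentration bound gives a deviation of $O(\sqrt{m'}\,\sqrt{\log d}/d+\log d/d)$ with probability $1-d^{-c}$. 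A Bernstein bound on a ratio of chi-squares gives this form: write $\norm{\Proj W \hat r}^2=X/(X+Y)$ with $X\sim\chi^2_{m'}$ and $Y\sim\chi^2_{n-m'}$, then use Laurent--Massart tails $|X-m'|\lesssim\sqrt{m'\log d}+\log d$ and $Y=n(1+O(\sqrt{\log d/n}))$.

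The main obstacle is bookkeeping the logs so the result matches the stated $(\sqrt{m'}+\log d)/d$ term. The Laurent--Massart bound gives $\sqrt{m'\log d}$, so we need either constant-probability "w.h.p." or acceptance of an extra $\sqrt{\log d}$. A second issue is the justification that the intrinsic residuals behave at scale $\sqrt{\log d/d}$, which we would state as an assumption inherited from random packing.
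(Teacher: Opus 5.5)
Your route differs from the paper's and is sharper. The paper works only with the first-order proxy. Each $\ip{v^*}{c_i}^2$ with $i\notin I$ has mean exactly $1/d$ and is sub-exponential at scale $1/d$, so Bernstein on $\sum_{i\notin I}\ip{v^*}{c_i}^2$ gives $m'/d$ plus fluctuations; that is where the literal \emph{exact in expectation} comes from. The passage from $\norm{C^\top v^*}^2$ to $\tau$ is left implicit, via the Gershgorin step of Theorem~\ref{thm:scaling}. But that factor $1+O(m\mu)$ also multiplies $\tau_0$. This leaves an error of order $\tau_0 m\mu\approx\tau_0 m'\sqrt{\log d/d}$, which exceeds the $m'/d$ term for any fixed $\tau_0>0$. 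Your splitting $\Proj{C}=\Proj{C_I}+\Proj{W}$ confines the Gram inversion to the $|I|$ intrinsic directions and makes the incidental part an exact $\mathrm{Beta}$ variable, so you control $\tau$ itself rather than its proxy. The price is that it exposes the true conditional mean of the incidental part, $(1-\tau_I)\,m'/(d-|I|)$ with $\tau_I=\norm{\Proj{C_I}v^*}^2$: incidental directions can only capture the part of $v^*$ not already in $C_I$.

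Three repairs are needed.
\begin{enumerate}
\item Drop the $O(m'|I|/d^2)$ comparison step. Write $u_j=g_j/\norm{g_j}$ with $g_j$ standard Gaussian. Then the $\Proj{C_I^\perp}g_j$ are i.i.d.\ standard Gaussians in $C_I^\perp$, so $W$ is exactly Haar there, with $n=d-|I|$.
\item Your Cauchy--Schwarz step is wrong. It gives $|\sum_{i\in I}\gamma_i\epsilon_i|\le\sqrt{\tau_0|I|}\,\max_i|\epsilon_i|\le\bar\gamma|I|\max_i|\epsilon_i|$, not $\bar\gamma\sqrt{|I|}\max_i|\epsilon_i|$. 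Getting $\sqrt{|I|}$ needs a Hoeffding bound on the weighted sum of independent sub-Gaussian residuals, with variance proxy $\tau_0/d\le\bar\gamma^2|I|/d$; this uses the independence assumption you already flag. Alternatively, the point is moot once $|I|=O(1)$.
\item Your caveat about $\tau_0 m'/d$ closes only under the hypothesis $m\mu<1$ inherited from Theorem~\ref{thm:scaling}. With $\mu\asymp\sqrt{\log d/d}$ this gives $m'/d\lesssim 1/\sqrt{d\log d}$. Since $\tau_0\le\sqrt{\tau_0}\le\bar\gamma\sqrt{|I|}$ for $\tau_0\le 1$, the shift is absorbed by the cross term.
\end{enumerate}

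Without that hypothesis the statement fails for $\tau$ itself. Take $|I|=1$, $\gamma$ fixed, $m'=d^{0.8}$: the shift is of order $\gamma^2 d^{-0.2}$ and dominates every stated error term. You should therefore say explicitly that $m'/d$ is exact in expectation only for the proxy $\sum_{i\notin I}\ip{v^*}{c_i}^2$, not for $\tau$. Your point about the extra $\sqrt{\log d}$ is correct, and it applies equally to the paper's Bernstein argument.
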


\begin{proof}
For independent uniform vectors, $\ip{v^*}{c_i}^2$ has mean $1/d$ and is sub-exponential with parameter $O(1/d^2)$. Apply Bernstein's inequality (see \citet{GirouxRahmanSchmeisser1979}) to the sum.
\end{proof}

\begin{corollary}[Scaling regimes]
\label{cor:scaling}
\leavevmode
\begin{enumerate}
\item \emph{Fixed capabilities} ($m'$ constant): $\tau \to \tau_0$ at rate $O(\log d / d)$.
\item \emph{Sublinear growth} ($m' = d^\beta$, $\beta < 1$): $\tau \to \tau_0$ at rate $O(\log d / d^{1-\beta})$.
\item \emph{Linear growth} ($m' = \Theta(d)$): $\E[\tau - \tau_0] = \Theta(1)$. Scaling does not reduce the incidental tax.
\end{enumerate}
\end{corollary}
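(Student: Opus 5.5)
The plan is to derive all three regimes directly from Proposition~\ref{prop:tight-scaling}, using Theorem~\ref{thm:scaling} only as a deterministic cross-check. Write
\[
\tau - \tau_0 = \frac{m'}{d} + E(d), \qquad |E(d)| = O\!\left(\frac{\bar{\gamma}\sqrt{|I|\log d}}{\sqrt{d}} + \frac{\sqrt{m'} + \log d}{d}\right)
\]
with high probability. In each regime I will substitute the prescribed growth of $m'$ and identify the dominant term. Throughout I take $|I|$ and $\bar{\gamma}$ fixed, since intrinsic overlaps are a property of the data and do not depend on $d$. One subtlety: if $\bar{\gamma} > 0$, the cross term $\bar{\gamma}\sqrt{|I|\log d/d}$ decays only like $d^{-1/2}$ up to logs. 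The stated rates $O(\log d/d)$ and $O(\log d/d^{1-\beta})$ are therefore the rates of the \emph{incidental} (packing) part of the residual. They hold for the full residual either when $\bar{\gamma}\epsilon_i$ cross terms vanish in expectation (mean-zero $\epsilon_i$ for $i \in I$), or when we read the corollary as concerning $\E[\tau] - \tau_0$. I will make this explicit by treating the expectation separately from the high-probability fluctuation.

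\emph{Regime 1.} With $m'$ constant, $m'/d = O(1/d)$ and $(\sqrt{m'} + \log d)/d = O(\log d/d)$. So the incidental residual is $O(\log d/d)$ with high probability, and $\tau \to \tau_0$. As a cross-check, Theorem~\ref{thm:scaling} with $\mu^2 = O(\log N/d)$ and $N = d^p$ gives $m'\mu^2 = O(m' \log d/d)$, which agrees.

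\emph{Regime 2.} With $m' = d^\beta$ and $\beta<1$, the leading term is $d^{\beta-1}$. The fluctuation term $\sqrt{m'}/d = d^{\beta/2 - 1}$ is smaller, and $\log d/d$ is smaller still. Hence the residual is $O(d^{\beta-1}) \subseteq O(\log d/d^{1-\beta})$. The extra $\log d$ absorbs the deterministic packing bound $m'\mu^2$ from Theorem~\ref{thm:scaling}. The near-orthogonality hypothesis $m\mu<1$ holds for large $d$, since $m\mu \approx d^\beta\sqrt{\log d/d} \to 0$ exactly when $\beta < 1/2$. For $\beta \ge 1/2$ I would not rely on Theorem~\ref{thm:scaling}. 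Instead I would use the proposition's random-matrix argument: the Gram matrix of $m' = o(d)$ uniform vectors has extreme eigenvalues $1 \pm O(\sqrt{m'/d})$ by standard concentration, so $\norm{\Proj{C}v^*}^2 = (1+o(1))\norm{C^\top v^*}^2$ still holds.

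\emph{Regime 3.} With $m' = \kappa d$, the main obstacle is that the near-orthogonality assumption fails, so the proposition's error terms are no longer small. I would compute the expectation directly. For $m'$ independent uniform directions with $m' < d$, the incidental part of $C$ is a uniformly random $m'$-dimensional subspace. By rotational invariance, $\E\norm{\Proj{C}u}^2 = m'/d$ for any fixed unit vector $u$. Apply this to the component of $v^*$ orthogonal to the intrinsic directions, which has squared norm $1 - \tau_0 + o(1)$; conditioning on the intrinsic directions, the projection of the remaining random subspace onto the complement is still uniformly random there. This gives $\E[\tau - \tau_0] = (1-\tau_0)\,m'/d + o(1) = \Theta(1)$ whenever $\tau_0 < 1$, and the value does not shrink as $d$ grows. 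The delicate points are making the conditioning argument rigorous and handling $m' \ge d$, where $C = \R^d$ and $\tau = 1$, which is still $\Theta(1)$.
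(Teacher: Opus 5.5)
Your proposal is correct, and for regimes 1 and 2 it follows the derivation the paper intends. The corollary has no separate proof; it is obtained by substituting the growth of $m'$ into Proposition~\ref{prop:tight-scaling}, with Theorem~\ref{thm:scaling} supplying the $\log d$ factor.

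Regime 3 is where your route differs. The paper relies on the claim that $m'/d$ is exact in expectation. But that Bernstein argument concerns $\sum_i \ip{v^*}{c_i}^2 = \norm{C^\top v^*}^2$, which stops being a proxy for $\tau$ once near-orthogonality fails; its mean $m'/d$ even exceeds $1$ when $m'>d$. Your rotational-invariance computation instead works with the exact projection, through the orthogonal splitting $\Proj{C} = \Proj{C_I} + \Proj{\tilde{C}'}$, where $\tilde{C}' = \Proj{C_I^\perp}C'$ is a uniformly random subspace of $C_I^\perp$. This buys three things:
\begin{itemize}
\item it needs no near-orthogonality;
\item it gives the correct constant $(1-\tau_0)\,m'/(d-|I|)$;
\item it makes explicit the hypothesis $\tau_0<1$ that the $\Theta(1)$ claim silently requires.
\end{itemize}
Your caveat about the $\bar{\gamma}$ cross term is also well taken. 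The paper's bounds deliver the rates $O(\log d/d)$ and $O(\log d/d^{1-\beta})$ only when $I=\emptyset$, and the paper does not address this.

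One step should be repaired. For $\beta\ge 1/2$ you compare $\norm{\Proj{C}v^*}^2$ with $\norm{C^\top v^*}^2$ through a factor $1+O(\sqrt{m'/d})$. When $\tau_0>0$ this factor multiplies $\tau_0$, so the argument can only certify an additive error of order $\tau_0\sqrt{m'/d} = O(d^{(\beta-1)/2})$, which exceeds the target rate. Your regime-1 cross-check has the same problem: it skips the $\tau_0 m\mu = O(\sqrt{\log d/d})$ term of Theorem~\ref{thm:scaling}.

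The fix is to use your regime-3 decomposition in every regime. Set $u = \Proj{C_I^\perp}v^*$. Conditionally on the intrinsic directions, $\norm{\Proj{\tilde{C}'}u}^2/\norm{u}^2 \sim \mathrm{Beta}\bigl(m'/2,(d-|I|-m')/2\bigr)$. Chi-square concentration then gives, with high probability and for all $\beta<1$ at once,
\[
\tau - \norm{\Proj{C_I}v^*}^2 = \norm{u}^2\, m'/(d-|I|) + O\bigl((\sqrt{m'\log d}+\log d)/d\bigr).
\]
This isolates all non-incidental error in the single intrinsic term $\norm{\Proj{C_I}v^*}^2-\tau_0$.
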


\begin{remark}[Diagnostic for labs]
Plot per-task alignment tax against model dimension. Tasks where $\tau_i$ decreases with $d$ have incidental overlap (alignment tax is an engineering problem solvable by scaling). Tasks where $\tau_i$ plateaus have intrinsic overlap (alignment tax is a fundamental tradeoff requiring objective modification). This is a testable prediction.
\end{remark}

\begin{remark}[Self-consistency of near-orthogonality]
\label{rem:near-orth}
For the near-orthogonality condition $m\mu < 1$ to hold under random packing with $m = d^\beta$ and $\mu = O(\sqrt{\log d / d})$, we need $d^\beta \sqrt{\log d / d} = d^{\beta - 1/2}\sqrt{\log d} \to 0$, which holds for $\beta < 1/2$. For $\beta \in [1/2, 1)$, the Gershgorin bound (see \citet{HornJohnson2012}) is not small and the approximation $\norm{\Proj{C}v^*}^2 \approx \norm{C^\top v^*}^2$ weakens. However, the qualitative conclusion of incidental tax vanishes if $m' = o(d)$ persists via direct analysis of $\E[\norm{\Proj{C}v^*}^2]$ using the \citet{MarchenkoPasstur1967} distribution.
\end{remark}

\section{Multi-Objective Safety and the Conflict Theorem}
\label{sec:conflict}

Alignment typically involves multiple safety objectives (e.g. harmlessness and helpfulness). We show the tradeoff between safety objectives under capability constraints is governed by the same frontier equation.

\subsection{Safety-Safety Frontier}

Let $v_1^*, v_2^* \in \Sph^{d-1}$ be two safety directions with $\ip{v_1^*}{v_2^*} = \rho$, and let $c \in \Sph^{d-1}$ be a capability direction with $a = \ip{c}{v_1^*}$, $b = \ip{c}{v_2^*}$.

\begin{theorem}[Safety-safety Pareto frontier under capability preservation]
\label{thm:safety-safety}
Under capability preservation ($\ip{c}{\delta} = 0$), define projected safety directions $\tilde{v}_i = \Proj{c^\perp}v_i^*$ and the effective angle $\theta$ between them:
\begin{equation}
\label{eq:partial-corr}
\cos\theta = \frac{\rho - ab}{\sqrt{(1-a^2)(1-b^2)}}.
\end{equation}
The Pareto frontier for normalized safety gains $s_i = \Delta_{S_i} / (B'\norm{\tilde{v}_i})$, where $B' = \sqrt{B^2 - \norm{\delta_C^*}^2}$, is:
\begin{equation}
\label{eq:safety-safety-frontier}
s_1 = s_2\cos\theta + \sin\theta\sqrt{1 - s_2^2}.
\end{equation}
This is the same equation as the safety-capability frontier (Theorem~\ref{thm:single-pareto}), with the angle $\alpha$ replaced by the effective angle $\theta$.
\end{theorem}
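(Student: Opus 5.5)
The plan is to reduce the statement to Theorem~\ref{thm:single-pareto} applied inside the hyperplane $c^\perp$. First I will fix the capability component of the perturbation. Write $\delta = \delta_C^* + \delta_\perp$ with $\delta_\perp \in c^\perp$. Capability preservation, $\ip{c}{\delta}=0$, is the case $\delta_C^*=0$; for a general fixed capability target, the budget left for $\delta_\perp$ is $\norm{\delta_\perp}\le B'$, exactly as in the proof of Theorem~\ref{thm:max-safety}. For $\delta_\perp \in c^\perp$ we have
\[
\ip{v_i^*}{\delta_\perp} = \ip{\Proj{c^\perp}v_i^*}{\delta_\perp} = \ip{\tilde v_i}{\delta_\perp},
\]
so each safety gain depends on $\delta_\perp$ only through the projected directions $\tilde v_i$. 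The constant offset $\ip{v_i^*}{\delta_C^*}$ vanishes under capability preservation, and I will take the normalized gains $s_i$ to refer to the $\delta_\perp$ part. If $\delta_C^*\neq 0$, the offset simply shifts the frontier; I will state this explicitly rather than absorb it.

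Next I compute the geometry of the projected directions. Since $\tilde v_i = v_i^* - \ip{c}{v_i^*}\,c$, we get $\norm{\tilde v_1}^2 = 1-a^2$, $\norm{\tilde v_2}^2 = 1-b^2$, and
\[
\ip{\tilde v_1}{\tilde v_2} = \rho - 2ab + ab = \rho - ab.
\]
Normalizing gives unit vectors $u_i = \tilde v_i/\norm{\tilde v_i}$ in $c^\perp$ with $\ip{u_1}{u_2} = \cos\theta$, where $\cos\theta$ is exactly \eqref{eq:partial-corr}. By Cauchy--Schwarz this quantity lies in $[-1,1]$, so $\theta$ is well defined. This requires $|a|,|b|<1$; the degenerate cases $v_i^* = \pm c$ give $\tilde v_i=0$ and must be excluded or treated separately.

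Now rescale. Put $\eta = \delta_\perp/B'$, so that $\norm{\eta}\le 1$, $\eta \in c^\perp$, and
\[
s_i = \frac{\ip{\tilde v_i}{\delta_\perp}}{B'\norm{\tilde v_i}} = \ip{u_i}{\eta}.
\]
This is precisely the setting of Theorem~\ref{thm:single-pareto} in the ambient space $c^\perp \cong \R^{d-1}$, with budget $1$. The roles are: $u_1$ plays the "safety" direction $v^*$, $u_2$ plays the "capability" direction $c$, and the angle between them is $\theta$. That theorem gives the maximal $s_1$ at fixed $s_2 \in [-1,1]$ as
\[
s_1 = s_2\cos\theta + \sin\theta\sqrt{1-s_2^2}.
\]
Its tightness construction transfers verbatim: it builds the extremal $\eta^*$ in $\mathrm{span}(u_1,u_2)\subseteq c^\perp$ with $\norm{\eta^*}=1$, and $\delta^* = \delta_C^* + B'\eta^*$ then respects both the budget and the capability constraint.

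The main obstacle I expect is not the algebra but bookkeeping at the edges. One issue is the $\sin\theta = 0$ case, which is excluded in the division step of Theorem~\ref{thm:single-pareto}; there I will argue directly that the frontier collapses to $s_1 = \pm s_2$. The other is making explicit that the "Pareto frontier" is the upper envelope (maximal $s_1$ for given $s_2$), matching the convention of Theorem~\ref{thm:single-pareto}, together with the offset issue when $\delta_C^* \ne 0$.
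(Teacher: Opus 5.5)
Your proposal is correct and follows the paper's proof: both work inside $c^\perp$ with the normalized projected directions $\tilde v_i/\norm{\tilde v_i}$, compute $\ip{\tilde v_1}{\tilde v_2}=\rho-ab$ and $\norm{\tilde v_i}^2 = 1-\ip{c}{v_i^*}^2$, and then reuse the argument of Theorem~\ref{thm:single-pareto} with budget $B'$. Your explicit handling of the degenerate cases ($|a|=1$ or $|b|=1$, and $\sin\theta=0$) and of the offset $\ip{v_i^*}{\delta_C^*}$ when $\delta_C^*\neq 0$ is extra care that the paper's proof leaves implicit.
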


\begin{proof}
In $c^\perp$ with budget $B'$, write $\tilde{v}_1/\norm{\tilde{v}_1} = e_1$ and $\tilde{v}_2/\norm{\tilde{v}_2} = \cos\theta \, e_1 + \sin\theta \, e_2$. The optimization $\max \ip{\tilde{v}_1}{\delta}$ subject to $\norm{\delta} \leq B'$ and a given $\ip{\tilde{v}_2}{\delta}$ is identical in structure to the proof of Theorem~\ref{thm:single-pareto}. For $\cos\theta$: $\ip{\tilde{v}_1}{\tilde{v}_2} = \ip{v_1^* - ac}{v_2^* - bc} = \rho - ab$, and $\norm{\tilde{v}_i}^2 = 1 - \ip{c}{v_i^*}^2$.
\end{proof}

\begin{remark}[Partial correlation]
\label{rem:partial-corr}
The quantity $\cos\theta = (\rho - ab)/\sqrt{(1-a^2)(1-b^2)}$ is exactly the partial correlation of $v_1^*$ and $v_2^*$ given $c$: the correlation between safety objectives after controlling for the capability direction. This connects the alignment tax geometry to the statistics of conditional independence.
\end{remark}

\begin{corollary}[Maximum equal improvement]
\label{cor:equal-improvement}
The maximum equal normalized gain $s_1 = s_2 = s$ is:
\begin{equation}
s = \cos(\theta/2).
\end{equation}
Joint improvement degrades continuously as $\theta$ increases from $0$ to $\pi$.
\end{corollary}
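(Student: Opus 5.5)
To prove Corollary~\ref{cor:equal-improvement}, I will use the frontier of Theorem~\ref{thm:safety-safety} directly. Setting $s_1 = s_2 = s$ in \eqref{eq:safety-safety-frontier} gives the equation $s(1-\cos\theta) = \sin\theta\sqrt{1-s^2}$. The equal-gain point on the Pareto frontier is exactly where the diagonal $s_1=s_2$ meets the frontier curve. Since the frontier bounds $s_1$ from above as a function of $s_2$, the largest equal gain is the largest root of this equation with $s\ge 0$.

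\textbf{Solving the equation.} I will use the half-angle identities $1-\cos\theta = 2\sin^2(\theta/2)$ and $\sin\theta = 2\sin(\theta/2)\cos(\theta/2)$. For $\theta\in(0,\pi)$, dividing by $2\sin(\theta/2)>0$ gives $s\sin(\theta/2) = \cos(\theta/2)\sqrt{1-s^2}$. Both sides are nonnegative for $s\in[0,1]$, so squaring is safe. Squaring gives $s^2 = \cos^2(\theta/2)$, and hence $s = \cos(\theta/2)$.

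\textbf{Geometric check.} There is a cleaner cross-check that also covers the endpoint cases. In the plane spanned by $e_1$ and $e_2$, the normalized gains are $s_1 = \langle u, e_1\rangle$ and $s_2 = \langle u, \cos\theta\, e_1 + \sin\theta\, e_2\rangle$, where $u$ ranges over the unit disk. Equal gains mean $u$ is orthogonal to the difference of the two unit vectors. The optimum is therefore $u$ equal to the unit bisector of the two directions. Its inner product with each direction is $\cos(\theta/2)$, and this value is optimal because maximizing $s_1+s_2 = \langle u, e_1+\tilde e\rangle$ over the disk gives $\|e_1+\tilde e\| = 2\cos(\theta/2)$.

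\textbf{Endpoints and monotonicity.} This argument handles $\theta=0$, where $s=1$ and the disk point $u$ is any maximizer. It also handles $\theta=\pi$, where the two directions are antipodal and $s=0$. Monotone decrease follows because $\cos(\theta/2)$ is continuous and strictly decreasing on $[0,\pi]$.

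The main obstacle is not the algebra. It is justifying that the point found lies on the true Pareto frontier. Specifically, I must pick the branch with $s\ge 0$ and the $+$ sign of the square root, rather than the spurious root $-\cos(\theta/2)$ introduced by squaring. The bisector argument settles this.
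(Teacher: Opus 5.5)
Your proposal is correct, and its main line matches the paper's proof: substitute $s_1=s_2=s$ into the frontier equation and solve, which the paper does by squaring directly to reach $2s^2=1+\cos\theta$. Your half-angle factoring and the bisector bound $s_1+s_2=\langle u, e_1+\tilde e\rangle\le 2\cos(\theta/2)$ go beyond the paper. They rule out the spurious root $-\cos(\theta/2)$, which the paper discards without comment, and they cover $\theta=0$, where the paper's simplification implicitly divides by $1-\cos\theta$.
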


\begin{proof}
Set $s_1 = s_2 = s$ in \eqref{eq:safety-safety-frontier}: $s(1 - \cos\theta) = \sin\theta\sqrt{1-s^2}$. Squaring and simplifying: $2s^2 = 1 + \cos\theta$, hence $s = \sqrt{(1+\cos\theta)/2} = \cos(\theta/2)$.
\end{proof}

\subsection{When Does Capability Preservation Help or Hurt?}

\begin{theorem}[Capability-mediated safety conflict]
\label{thm:conflict}
Let $\mathcal{C}(c) = \cos\theta$ be the effective correlation between safety objectives under capability preservation.
\begin{enumerate}
\item If $\sgn(a) \neq \sgn(b)$ (opposite-sign projections), then $\mathcal{C}(c) > \rho$: preserving this capability strictly improves the safety tradeoff. The capability is a source of conflict between safety objectives; removing it (holding it fixed) resolves the conflict.
\item If $a = b$ (symmetric same-sign projections), then $\mathcal{C}(c) \leq \rho$: preserving this capability weakly worsens the safety tradeoff.
\item If $\sgn(a) = \sgn(b)$ with $a \neq b$ (asymmetric same-sign), the capability can act as a suppressor variable, and $\mathcal{C}(c) > \rho$ is possible for sufficiently correlated safety objectives.
\end{enumerate}
\end{theorem}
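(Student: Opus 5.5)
The plan is to treat $\mathcal{C}(c)=\cos\theta$ from \eqref{eq:partial-corr} as an explicit function of $(\rho,a,b)$ and compare it directly with $\rho$. Two auxiliary facts do most of the work. First, the Gram matrix of $(v_1^*,v_2^*,c)$ is positive semidefinite, which is equivalent to $|\rho-ab|\le D$, where $D:=\sqrt{(1-a^2)(1-b^2)}$. Second,
\[
(1-|ab|)^2-(1-a^2)(1-b^2)=(|a|-|b|)^2\ge 0,
\]
so $D\le 1-|ab|\le 1$. Throughout I assume $|a|,|b|<1$, so that $\tilde v_i\neq 0$ and $\theta$ is defined, as in Theorem~\ref{thm:safety-safety}.

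\textbf{Part 2 ($a=b$).} Here $D=1-a^2$ and $\mathcal{C}(c)=(\rho-a^2)/(1-a^2)$. Multiplying through by $1-a^2>0$, the inequality $\mathcal{C}(c)\le\rho$ becomes $\rho-a^2\le\rho-\rho a^2$, i.e.\ $a^2(1-\rho)\ge 0$, which always holds. This also shows that equality occurs exactly when $a=0$ or $\rho=1$.

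\textbf{Part 1 ($ab<0$).} The numerator is $\rho+|ab|>\rho$. If $\rho+|ab|\ge 0$, then since $D\le 1$ we get $\mathcal{C}(c)\ge\rho+|ab|>\rho$ immediately. The remaining subcase, $\rho<-|ab|$, is where I expect the main obstacle. Dividing a negative numerator by $D<1$ pushes it further from zero, so the bound has to come from the Gram constraint instead. Concretely, I would parametrize $v_1^*=e_1$, $v_2^*=\rho e_1+\sqrt{1-\rho^2}\,e_2$, and $c=a e_1+\beta e_2+\gamma e_3$, and then check directly the sign of $\rho+|ab|-\rho D$. The extreme case $\rho=-1$ (so $b=-a$) gives $\mathcal{C}(c)=-1=\rho$. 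So strictness can only hold for $\rho>-1$, and I anticipate that either an additional hypothesis (for instance $\rho\ge -|ab|$, or $\rho\ge 0$) is needed, or a finer argument using $|\rho-ab|\le D$ together with $D\le 1-|ab|$. I would first try to prove $\rho+|ab|>\rho D$ from these two facts. If that fails, I would state the result under the hypothesis $\rho+|ab|\ge 0$ and note the boundary behaviour explicitly.

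\textbf{Part 3 (same sign, $a\ne b$).} This part is only an existence claim, so it suffices to exhibit a configuration. I would take $a=0.9$, $b=0.1$, $\rho=0.5$. Then $D=\sqrt{0.19\cdot 0.99}\approx 0.434$ and $\mathcal{C}(c)\approx 0.41/0.434\approx 0.945>\rho$. Realizability follows from $|\rho-ab|\le D$, which makes the Gram matrix PSD, so explicit vectors exist in $\R^3$. To explain the mechanism, I would observe that $\mathcal{C}(c)>\rho$ with $\rho-ab\ge0$ holds whenever $\rho(1-D)>ab$. Since $D\le 1-|ab|$ with a large gap when $|a|$ and $|b|$ differ substantially, this condition holds for $\rho$ large enough; this is the suppressor-variable effect the statement describes.
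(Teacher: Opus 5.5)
Your parts 2 and 3 are fine and follow the paper's argument. Part 2 is the same reduction to $a^2(1-\rho)\ge 0$. Part 3 is existence by example; the paper uses $\rho=0.99$, $a=0.1$, $b=0.001$, and your Gram check that the configuration is realizable is a step the paper skips.

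The problem is part 1, and your suspicion is right. The paper's proof asserts $(\rho-ab)/\sqrt{(1-a^2)(1-b^2)}>\rho-ab$, which silently requires $\rho-ab>0$. So the paper's proof covers your first subcase and nothing more. Your planned next step, deriving $\rho+|ab|>\rho D$ from $|\rho-ab|\le D$ and $D\le 1-|ab|$, cannot succeed, because that inequality is false.

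Take $a=0.9$, $b=-0.1$, $\rho=-0.5$. The Gram determinant is $1-\rho^2-a^2-b^2+2\rho ab=0.02>0$, so the configuration exists in $\R^3$. Yet $\rho-ab=-0.41$ and $D\approx 0.434$, giving $\mathcal{C}(c)\approx-0.945<\rho$.

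This counterexample is your own part-3 example after $v_2^*\mapsto -v_2^*$. That map sends $(\rho,b,\mathcal{C}(c))\mapsto(-\rho,-b,-\mathcal{C}(c))$. So every suppressor example for part 3 produces a counterexample to part 1, and the two parts as stated are inconsistent. At the Gram boundary $\rho=ab-D$ one even gets $\mathcal{C}(c)=-1<\rho$ whenever $|a|\neq|b|$.

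The right move is therefore your fallback, stated sharply. For $ab<0$ and $|a|,|b|<1$,
\[
\mathcal{C}(c)>\rho \iff \rho(1-D)>-|ab|.
\]
This covers three cases:
\begin{itemize}
\item It holds whenever $\rho\ge -|ab|$, in particular whenever $\rho\ge 0$. This is the only case the paper's argument actually handles.
\item In the symmetric case $b=-a$, where $D=1-|ab|$, it holds for every $\rho>-1$. This matches your observation that equality occurs at $\rho=-1$.
\item It fails for sufficiently negative $\rho$ when $|a|\neq|b|$.
\end{itemize}
Reading the hypothesis as $ab<0$, as you did, is itself necessary. The literal condition $\sgn(a)\neq\sgn(b)$ admits $a=0\neq b$, and there $\mathcal{C}(c)=\rho/\sqrt{1-b^2}\le\rho$ whenever $\rho\le 0$.
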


\begin{proof}
\emph{(i)} $\rho - ab > \rho$ since $ab < 0$, and $\sqrt{(1-a^2)(1-b^2)} < 1$, so $\mathcal{C}(c) = (\rho - ab)/\sqrt{(1-a^2)(1-b^2)} > \rho - ab > \rho$.

\emph{(ii)} With $a = b$: $\mathcal{C}(c) = (\rho - a^2)/(1-a^2)$. The inequality $(\rho - a^2)/(1-a^2) \leq \rho$ reduces to $\rho a^2 \leq a^2$, which holds since $\rho \leq 1$.

\emph{(iii)} Example: $\rho = 0.99$, $a = 0.1$, $b = 0.001$ gives $\mathcal{C}(c) \approx 0.995 > 0.99 = \rho$.
\end{proof}

\subsection{Multiple Capabilities}

\begin{proposition}[Multi-capability partial correlation]
\label{prop:multi-cap}
For capability subspace $C = [c_1 | \cdots | c_m]$ with $a = C^\top v_1^*$, $b = C^\top v_2^*$, and individual tax rates $\tau_i = \norm{\Proj{C}v_i^*}^2$, the effective angle between projected safety objectives is:
\begin{equation}
\cos\theta = \frac{\rho - a^\top(C^\top C)^{-1}b}{\sqrt{(1-\tau_1)(1-\tau_2)}}.
\end{equation}
The safety-safety Pareto frontier under preservation of all capabilities takes the same form \eqref{eq:safety-safety-frontier} with this $\theta$.
\end{proposition}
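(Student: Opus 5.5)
We follow the proof of Theorem~\ref{thm:safety-safety}, replacing the single capability direction $c$ by the full subspace $C$. Capability preservation now means $\Proj{C}\delta = \delta_C^*$. Writing $\delta = \delta_C^* + \delta_\perp$ with $\delta_\perp \in C^\perp$ and $\norm{\delta_\perp} \le B'$, as in Theorem~\ref{thm:max-safety}, each safety gain becomes
\[
\Delta_{S_i} = \ip{v_i^*}{\delta_C^*} + \ip{\tilde v_i}{\delta_\perp}, \qquad \tilde v_i = \Proj{C^\perp} v_i^*.
\]
The first term is a constant fixed by the capability target. In the paper's normalization with $\delta_C^* = 0$ it vanishes; otherwise we shift $\Delta_{S_i}$ by this constant before normalizing. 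The problem then reduces to optimizing $\ip{\tilde v_1}{\delta_\perp}$ against $\ip{\tilde v_2}{\delta_\perp}$ over a ball of radius $B'$ in $C^\perp$.

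Inside $C^\perp$, only the component of $\delta_\perp$ in $\mathrm{span}(\tilde v_1, \tilde v_2)$ affects either objective. We choose an orthonormal basis with $e_1 = \tilde v_1/\norm{\tilde v_1}$ and $\tilde v_2/\norm{\tilde v_2} = \cos\theta\, e_1 + \sin\theta\, e_2$. With the normalization $s_i = \Delta_{S_i}/(B'\norm{\tilde v_i})$, the two-dimensional computation in the proof of Theorem~\ref{thm:single-pareto}, taken with $B = 1$, gives the frontier \eqref{eq:safety-safety-frontier} and its tightness verbatim. The degenerate cases are handled separately. If $\tilde v_1$ and $\tilde v_2$ are parallel, then $\sin\theta = 0$ and the frontier is linear. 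If some $\tau_i = 1$, then $\tilde v_i = 0$ and the normalization is undefined, so we exclude this case.

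The main computational step is identifying $\cos\theta$. We first need $\norm{\tilde v_i}^2 = 1 - \norm{\Proj{C} v_i^*}^2 = 1 - \tau_i$, which follows from the Pythagorean identity for the orthogonal decomposition $v_i^* = \Proj{C}v_i^* + \tilde v_i$. Next, $\ip{\tilde v_1}{\tilde v_2} = \ip{v_1^*}{v_2^*} - \ip{\Proj{C}v_1^*}{\Proj{C}v_2^*}$, again by orthogonality of the decomposition, since the cross terms vanish. Finally we need $\Proj{C} = C(C^\top C)^{-1}C^\top$. This requires $c_1, \ldots, c_m$ to be linearly independent so that $C^\top C$ is invertible; we state this as an implicit assumption, consistent with the non-additivity remark. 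Then
\[
\ip{\Proj{C}v_1^*}{\Proj{C}v_2^*} = v_1^{*\top} C(C^\top C)^{-1}C^\top v_2^* = a^\top (C^\top C)^{-1} b,
\]
using that $\Proj{C}$ is symmetric and idempotent. Dividing the inner product by $\norm{\tilde v_1}\norm{\tilde v_2} = \sqrt{(1-\tau_1)(1-\tau_2)}$ gives the stated formula.

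The only real obstacle is bookkeeping rather than mathematics. One point is that the $\tau_i$ in the statement must be read as the multi-capability tax rates $\norm{\Proj{C} v_i^*}^2$, which coincide with the quadratic forms $a^\top(C^\top C)^{-1}a$ and $b^\top(C^\top C)^{-1}b$. The other is making sure the affine offset $\ip{v_i^*}{\delta_C^*}$ is either removed or absorbed consistently into the normalized gains. As a sanity check, we will confirm that the formula reduces to \eqref{eq:partial-corr} when $m = 1$: there $C^\top C = 1$, and the expression becomes $(\rho - ab)/\sqrt{(1-a^2)(1-b^2)}$.
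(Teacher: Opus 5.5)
Your proposal is correct and follows essentially the same route as the paper. Both decompose $v_i^* = \Proj{C}v_i^* + \tilde v_i$, obtain $\ip{\tilde v_1}{\tilde v_2} = \rho - \ip{\Proj{C}v_1^*}{\Proj{C}v_2^*} = a^\top(C^\top C)^{-1}b$ via $\Proj{C} = C(C^\top C)^{-1}C^\top$, and use $\norm{\tilde v_i}^2 = 1-\tau_i$. You also soundly make explicit what the paper leaves implicit: the reduction of the frontier to the two-dimensional argument of Theorem~\ref{thm:single-pareto} inside $C^\perp$, the affine offset when $\delta_C^* \neq 0$, invertibility of $C^\top C$, and the degenerate cases $\tau_i = 1$ and $\sin\theta = 0$.
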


\begin{proof}
Let $\tilde{v}_i = \Proj{C^\perp}v_i^*$. We compute:
\begin{align}
\ip{\tilde{v}_1}{\tilde{v}_2} &\notag\\= \ip{v_1^* - \Proj{C}v_1^*}{v_2^* - \Proj{C}v_2^*} \notag\\= \rho - \ip{\Proj{C}v_1^*}{\Proj{C}v_2^*},
\end{align}
using $\ip{\Proj{C}v_i^*}{v_j^*} = \ip{v_i^*}{\Proj{C}v_j^*} = \ip{\Proj{C}v_i^*}{\Proj{C}v_j^*}$ since $\Proj{C}$ is idempotent and self-adjoint.
Now $\Proj{C}v_i^* = C(C^\top C)^{-1}C^\top v_i^*$, so:
\begin{align}
\ip{\Proj{C}v_1^*}{\Proj{C}v_2^*} \notag=\\ a^\top(C^\top C)^{-1}C^\top C (C^\top C)^{-1}b \notag=\\ a^\top(C^\top C)^{-1}b.
\end{align}
With $\norm{\tilde{v}_i}^2 = 1 - \tau_i$, the result follows.
\end{proof}

\begin{corollary}[Decomposition of joint alignment difficulty]
\label{cor:decomposition}
The maximum equal safety improvement under capability preservation factors into three independent components:
\begin{align}
\min(\Delta_{S_1}, \Delta_{S_2}) \leq \notag\\B' \cos(\theta/2) \min(\sqrt{1-\tau_1}, \sqrt{1-\tau_2}),
\end{align}
where $\tau_i$ are individual tax rates, $\theta$ is the capability-mediated angle, and $B' = \sqrt{B^2 - \norm{\delta_C^*}^2}$ is the residual budget.
\end{corollary}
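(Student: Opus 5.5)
The plan is to reduce everything to the residual problem in $C^\perp$ that was already solved in Theorem~\ref{thm:safety-safety} and Corollary~\ref{cor:equal-improvement}. Then the factor $\min(\sqrt{1-\tau_1},\sqrt{1-\tau_2})$ is recovered from the normalization $s_i = \Delta_{S_i}/(B'\norm{\tilde v_i})$. As in the proof of Theorem~\ref{thm:max-safety}, I will write $\delta = \delta_C^* + \delta_\perp$ with $\delta_\perp \in C^\perp$ and $\norm{\delta_\perp} \le B'$. The safety gains attributable to the free part of the budget are then $\ip{\tilde v_i}{\delta_\perp}$, where $\tilde v_i = \Proj{C^\perp} v_i^*$. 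By Proposition~\ref{prop:multi-cap}, $\norm{\tilde v_i} = \sqrt{1-\tau_i}$ and the angle between $\tilde v_1$ and $\tilde v_2$ is $\theta$. Following the paper's convention in Theorem~\ref{thm:safety-safety}, I take $\Delta_{S_i}$ to mean this residual gain, i.e.\ $\Delta_{S_i} = B'\sqrt{1-\tau_i}\, s_i$.

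The key geometric step is the bound $\min(s_1,s_2) \le \cos(\theta/2)$. Set $u_i = \tilde v_i/\norm{\tilde v_i}$ and $x = \delta_\perp/B'$, so that $s_i = \ip{u_i}{x}$ with $\norm{x}\le 1$. Bounding the minimum by the average gives
\[
\min(s_1,s_2) \le \tfrac12(s_1+s_2) = \ip{\tfrac12(u_1+u_2)}{x} \le \tfrac12\norm{u_1+u_2} = \sqrt{\tfrac{1+\cos\theta}{2}} = \cos(\theta/2).
\]
This recovers Corollary~\ref{cor:equal-improvement}, with equality at the normalized bisector. It also avoids the squaring step used there, and so it holds uniformly for $\theta\in[0,\pi]$.

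The remaining and main obstacle is converting the bound on $\min(s_1,s_2)$ into a bound on $\min(\Delta_{S_1},\Delta_{S_2})$ carrying the factor $\min_j\sqrt{1-\tau_j}$. I would first handle the regime the corollary is about, namely maximum equal improvement with $s_1 = s_2 = s$. There $\Delta_{S_i} = B' s\sqrt{1-\tau_i}$, so
\[
\min(\Delta_{S_1},\Delta_{S_2}) = B' s \min_j\sqrt{1-\tau_j} \le B'\cos(\theta/2)\min_j\sqrt{1-\tau_j}.
\]
This is exactly the stated upper bound, with three cleanly separated factors: budget, capability-mediated angle, and individual tax rates. For an unrestricted $\delta_\perp$, I would split into cases. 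If some $s_i \le 0$, then $\min(\Delta_{S_1},\Delta_{S_2}) \le 0$ and the bound is trivial. If both $s_i>0$, the averaging argument only gives $\min(\Delta_{S_1},\Delta_{S_2}) \le B'\max_j\sqrt{1-\tau_j}\,\min(s_1,s_2)$, which carries a max rather than a min.

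I expect this gap to be genuine when $\tau_1\neq\tau_2$ and $\theta$ is near $\pi$. In that case one can tilt $\delta_\perp$ toward the objective with smaller $\norm{\tilde v_i}$ and beat $\cos(\theta/2)\min_j\norm{\tilde v_j}$. So I would state the proof for equal normalized improvement, which is the setting of Corollary~\ref{cor:equal-improvement}, and for $\tau_1=\tau_2$, where min and max coincide. I would also flag that the fully unrestricted version holds with $\max_j\sqrt{1-\tau_j}$ in place of the minimum.
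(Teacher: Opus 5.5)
Your proof is correct for the reading under which the corollary is actually true, and its skeleton matches the paper's. The paper states Corollary~\ref{cor:decomposition} without proof. The intended derivation is exactly your equal-normalized case: substitute $s=\cos(\theta/2)$ from Corollary~\ref{cor:equal-improvement} into $\Delta_{S_i}=B's\sqrt{1-\tau_i}$ and take the minimum, so at that optimum the inequality is an equality.

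The genuine difference is how you obtain $\cos(\theta/2)$. The paper sets $s_1=s_2$ in the frontier equation and squares. That only locates where the diagonal meets the frontier, and it divides by $1-\cos\theta$, so it degenerates at $\theta=0$. Your bound $\min(s_1,s_2)\le\tfrac{1}{2}\norm{u_1+u_2}$ holds for every feasible $\delta_\perp$ and every $\theta\in[0,\pi]$. It is also exactly what exposes the off-diagonal behaviour.

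Reading $\Delta_{S_i}$ as the residual gain $\ip{\tilde v_i}{\delta_\perp}$ is necessary, not just conventional. For the total gain $\ip{v_i^*}{\delta}$ with $\delta_C^*\neq 0$, take any $\delta_C^*$ with $\ip{\Proj{C}v_i^*}{\delta_C^*}>0$ for both $i$. This is the negative-tax effect of Corollary~\ref{cor:negative}, and it lifts both gains above the right-hand side.

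Your suspicion about the unrestricted statement is right, and in fact understated. By the minimax theorem, $\max_{\norm{\delta_\perp}\le B'}\min_i\ip{\tilde v_i}{\delta_\perp}=B'\,\mathrm{dist}(0,[\tilde v_1,\tilde v_2])$, the distance from the origin to the segment joining $\tilde v_1$ and $\tilde v_2$. An elementary triangle computation shows this strictly exceeds $B'\cos(\theta/2)\min_j\norm{\tilde v_j}$ for every $\theta\in(0,\pi)$ once $\tau_1\neq\tau_2$, not only near $\pi$.

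Concretely, take $v_2^*=e_2$, $v_1^*=\tfrac{1}{2}e_1+\tfrac{\sqrt3}{2}e_3$ and $C=\mathrm{span}(e_3)$. Then $\tilde v_1=\tfrac{1}{2}e_1$, $\tilde v_2=e_2$, $\tau_1=3/4$, $\tau_2=0$ and $\theta=\pi/2$. The choice $\delta_\perp=B'(2e_1+e_2)/\sqrt5$ gives $\Delta_{S_1}=\Delta_{S_2}=B'/\sqrt5\approx 0.447B'$, while the claimed bound is $B'/(2\sqrt2)\approx 0.354B'$. These are equal \emph{raw} gains, so the phrase \emph{maximum equal safety improvement} has to mean $s_1=s_2$, as you assumed.

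Your $\max_j$ repair can be sharpened with the same averaging trick. Weight by $\lambda_i\propto 1/\norm{\tilde v_i}$ instead of $\tfrac{1}{2}$. This gives $\min(\Delta_{S_1},\Delta_{S_2})\le B'\cos(\theta/2)\cdot 2\norm{\tilde v_1}\norm{\tilde v_2}/(\norm{\tilde v_1}+\norm{\tilde v_2})$, a harmonic mean. It is at most $2\min_j\sqrt{1-\tau_j}$, coincides with the paper's bound when $\tau_1=\tau_2$, and is asymptotically sharp as $\theta\to\pi$.
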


\subsection{Scaling of the Conflict}

\begin{proposition}[Convergence of capability-mediated angle]
\label{prop:angle-scaling}
Under the packing model of \S\ref{sec:scaling} with $m' = o(d / \log d)$:
\begin{equation}
\theta(d) = \theta_0 + O(m'/d),
\end{equation}
where $\theta_0 = \arccos\!\left(\frac{\rho_0 - \sum_{i \in I}\gamma_{1i}\gamma_{2i}}{\sqrt{(1-\tau_{0,1})(1-\tau_{0,2})}}\right)$ is the irreducible capability-mediated angle. The convergence rate degrades to $O(\sqrt{m'/d})$ when $\theta_0 \in \{0, \pi\}$.
\end{proposition}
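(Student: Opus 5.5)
The plan is to treat $\cos\theta(d)$ as a smooth function of a small number of inner products and to push the perturbation bounds of Theorem~\ref{thm:scaling} and Proposition~\ref{prop:tight-scaling} through that function.

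By Proposition~\ref{prop:multi-cap},
\[
\cos\theta(d) = \frac{N(d)}{\sqrt{D_1(d)D_2(d)}}, \qquad N = \rho - a^\top(C^\top C)^{-1}b, \qquad D_j = 1-\tau_j .
\]
Each of $\rho$, $a$, $b$, $C^\top C$ splits into its intrinsic value plus packing residuals. Specifically, $\rho = \rho_0 + \epsilon_{12}$, $a_i = \gamma_{1i} + \epsilon_{1i}$, and $b_i = \gamma_{2i}+\epsilon_{2i}$, where $\gamma_{1i} = \gamma_{2i} = 0$ for $i \notin I$. Also $C^\top C = I_m + E$, with the off-diagonal entries of $E$ bounded by $\mu$.

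\textbf{Step 1: the denominators.} Theorem~\ref{thm:scaling}, in the random-packing form of Proposition~\ref{prop:tight-scaling}, already gives $\tau_j = \tau_{0,j} + m'/d + (\text{lower order})$. Hence $D_j = (1-\tau_{0,j}) + O(m'/d)$. This step needs $1-\tau_{0,j}$ bounded away from $0$, which I will assume; it is implicit in $\theta_0$ being well defined.

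\textbf{Step 2: the numerator.} Expand $(C^\top C)^{-1} = I - E + O(\|E\|^2)$ via a Neumann series, valid because $m\mu<1$. Then split
\[
a^\top(C^\top C)^{-1}b = \sum_{i\in I}\gamma_{1i}\gamma_{2i} + \text{(cross terms)} + \sum_{i\notin I}\epsilon_{1i}\epsilon_{2i} + \text{(terms involving } E\text{)}.
\]
For the incidental sum, independence and uniformity of the incidental directions make $\epsilon_{1i}\epsilon_{2i}$ have mean $\rho_0/d$ to leading order (since $\E[cc^\top] = I/d$). Its fluctuation is $O(\sqrt{m'}\log d/d)$ by the same Bernstein argument. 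So this sum contributes $O(m'/d)$. Together with $\rho$'s residual and the cross terms, this gives $N = N_0 + O(m'/d)$ after absorbing the intrinsic-cross contributions. Those contributions are the $\bar\gamma\sqrt{|I|\log d/d}$ terms of Proposition~\ref{prop:tight-scaling}; I will assume, as the statement implicitly does, that they are dominated by or folded into the $m'/d$ rate. The condition $m' = o(d/\log d)$ is what makes the logarithmic fluctuation terms and the $O(\|E\|)$ corrections lower order relative to a vanishing quantity.

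\textbf{Step 3: transfer to the angle.} The steps above give $\cos\theta(d) = \cos\theta_0 + \eta$ with $\eta = O(m'/d)$. If $\theta_0 \in (0,\pi)$, $\arccos$ is Lipschitz near $\cos\theta_0$ with constant $1/\sin\theta_0$, so $\theta(d)-\theta_0 = O(m'/d)$. At the endpoints, $\arccos(\pm1 \mp \eta)$ behaves like $\sqrt{2\eta}$, which gives the degraded $O(\sqrt{m'/d})$ rate. Since $|\cos\theta|\le 1$ always, the perturbation at an endpoint can only move inward, which is consistent with this one-sided square-root behavior.

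\textbf{Main obstacle.} The hardest part is Step 2: controlling the cross terms $\gamma\epsilon$ and the Neumann-series terms involving $E$ tightly enough to obtain $O(m'/d)$ rather than the cruder $O(\sqrt{\log d/d})$ that a worst-case coherence bound gives. My first attempt will use the randomness of the incidental directions, so that cross terms are mean-zero, together with exact orthogonality of the intrinsic block in the limit. If that fails, I will state the rate in expectation, mirroring the ``exact in expectation'' clause of Proposition~\ref{prop:tight-scaling}.
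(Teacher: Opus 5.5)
The paper states this proposition without proof, so there is nothing to compare against. Judged on its own, your proposal has a genuine gap, and it is exactly the step you defer. The terms you propose to fold into the $m'/d$ rate are not lower order:
\begin{itemize}
\item In Step~1, the error in Proposition~\ref{prop:tight-scaling} contains $\bar\gamma\sqrt{|I|\log d/d}$.
\item In Step~2, the cross terms $\gamma_{1i}\epsilon_{2i}+\epsilon_{1i}\gamma_{2i}$ ($i\in I$) and the residual $\rho-\rho_0$ are packing residuals of safety or intrinsic pairs. The model controls these at best by $\mu\asymp\sqrt{\log d/d}$, which is how Theorem~\ref{thm:scaling} bounds them.
\end{itemize}
For bounded $m'$, which $m'=o(d/\log d)$ allows, $m'/d\ll\sqrt{\log d/d}$, so these terms dominate. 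At $m'=0$ the statement would even assert $\theta(d)=\theta_0$ exactly, which fails whenever $\rho\neq\rho_0$.

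Your fallbacks cannot repair this. Those cross terms involve no incidental direction, so the randomness of the incidental directions cannot make them mean-zero. Orthogonality \emph{in the limit} gives no finite-$d$ rate. An in-expectation version needs a distribution on the intrinsic residuals, which the packing model does not provide. Without more, the error in $\theta(d)$ is of order $\mu$, not $m'/d$. What is missing is an explicit extra hypothesis: for example, that $\rho$, the overlaps $\ip{c_i}{v_j^*}$ for $i\in I$, and the intrinsic Gram block equal their intrinsic values at finite $d$ (or up to $O(m'/d)$), with only the incidental directions random.

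Even under that hypothesis, Step~2 needs a different mechanism. You justify the Neumann expansion by $m\mu<1$, but $m'=o(d/\log d)$ does not imply this; by Remark~\ref{rem:near-orth} it needs roughly $m'=o(\sqrt{d/\log d})$. With $\norm{E}\le m\mu$, the term $a^\top Eb$ is only bounded by $O(m\mu)$. Its intrinsic--incidental entries are also not mean-zero, since $\E[\ip{c_i}{c_j}\ip{c_j}{v_2^*}]=\gamma_{2i}/d$.

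A cleaner route avoids the expansion. Let $C_I$ be the span of the intrinsic capabilities, $u_j=\Proj{C_I^\perp}v_j^*$, and $C'$ the span of the incidental directions projected onto $C_I^\perp$. Then $\Proj{C^\perp}=\Proj{C_I^\perp}-\Proj{C'}$, so
\[
\cos\theta(d)=\frac{\ip{u_1}{u_2}-\ip{\Proj{C'}u_1}{\Proj{C'}u_2}}{\sqrt{\bigl(\norm{u_1}^2-\norm{\Proj{C'}u_1}^2\bigr)\bigl(\norm{u_2}^2-\norm{\Proj{C'}u_2}^2\bigr)}}.
\]
By rotational invariance, $C'$ is a uniformly random $m'$-dimensional subspace of $C_I^\perp$, so $\E\,\Proj{C'}=\frac{m'}{d-|I|}\Proj{C_I^\perp}$. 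In mean, the numerator and both denominator factors are therefore multiplied by the same factor $1-\frac{m'}{d-|I|}$, which cancels. What remains are fluctuations of order $\sqrt{m'}/d$ up to logarithms, with no Neumann series and no condition $m\mu<1$. Under the exactness hypothesis, $\ip{u_1}{u_2}=\rho_0-\sum_{i\in I}\gamma_{1i}\gamma_{2i}$ and $\norm{u_j}^2=1-\tau_{0,j}$. This yields the stated rate up to logarithmic factors.

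Your Step~3 is fine. Note, though, that under this hypothesis $\theta_0\in\{0,\pi\}$ forces $u_1\parallel u_2$, hence $\Proj{C^\perp}v_1^*\parallel\Proj{C^\perp}v_2^*$ and $\theta(d)=\theta_0$ exactly. So the square-root degradation at the endpoints can only come from the residual terms discussed above.
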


\section{Discussion}
\label{sec:discussion}

\subsection{Connections to Existing Empirical Results}
\label{sec:connections}

Several independent empirical findings can be seen as special cases of the principal-angle structure. These connections are retrospective as any geometric model with angle-parameterized tradeoffs would be similarly consistent, but they illustrate the result's scope.

NSPO \citep{niu2026nspo} constrains $\delta \in C^\perp$, implementing Theorem~\ref{thm:max-safety} with $\delta_C^* = 0$ and achieving $\Delta_S = B\sqrt{1-\tau}$. Their finding that safety improves with ${<}1\%$ capability degradation on most benchmarks corresponds to $\tau \ll 1$; their exceptions on math and coding correspond to capabilities with non-negligible $\tau_j$.

\citet{mou2025decoupling} show LoRA-based safety fine-tuning preserves capabilities via an orthogonal low-rank subspace. A rank-$r$ update in dimension $d$ perturbs each isotropically distributed capability by $\mathbb{E}[\langle c_j, P_U c_j\rangle] = r/d$, giving ${\sim}0.2\%$ for $r{=}8$, $d{=}4096$, which is consistent with their ``within 1\%'' finding.

\citet{lin2024mitigating} find weight interpolation $\theta_\alpha = (1{-}\alpha)\theta_{\text{base}} + \alpha\theta_{\text{RLHF}}$ achieves the best tradeoff. In our framework, this traces a straight line inscribed in the curved Pareto frontier~\eqref{eq:pareto}: good but not optimal. Their heterogeneous variant (per-layer $\alpha$) moves toward the true frontier by exploiting cross-layer variation in principal angles.

\citet{huang2025safetytax} document that reasoning degrades more than other capabilities under alignment. Corollary~\ref{cor:anisotropy} predicts this: reasoning directions have higher $\tau_i = \langle v^*, c_i\rangle^2$ because structured reasoning shares representational machinery with safety-relevant computation.

\citet{wang2026mrpo} optimize capability in $S^\perp$ rather than safety in $C^\perp$, which is the dual of our formulation. Their observed tradeoff when ``operating outside the pre-trained bias manifold'' is the non-zero tax regime ($\tau > 0$).

\subsection{Alignment Tax as a Measurable Quantity}

The most consequential implication may be operational rather than theoretical. If the alignment tax rate $\tau_i = \ip{v^*}{c_i}^2$ can be measured from representations before alignment training begins, then the workflow of alignment engineering changes. The current practice is reactive. We align the model, measure degradation across benchmarks, adjust hyperparameters, repeat. This suggests a prospective alternative. We could perhaps measure the safety direction and capability directions via probing, compute the principal angles, and know in advance which capabilities will be affected, by how much, and what the optimal perturbation direction is.

A pre-computable cost structure would allow design-time decisions about which safety objectives to pursue jointly, which capabilities to protect, and how to allocate the perturbation budget across layers all before committing to an expensive training run. The Pareto frontier (Theorem~\ref{thm:single-pareto}) is a description of the tradeoff; but by corollary it is also a target that optimal alignment procedures should achieve, and a benchmark against which suboptimal procedures (such as uniform model averaging) can be measured.

\subsection{What the Irreducible Tax Tells Us}

The decomposition $\tau = \tau_0 + R(d)$ separates the alignment tax into a component determined by the structure of the world and a component determined by the engineering constraints of finite-dimensional representations. The packing residual $R(d)$ vanishes with scale. The irreducible tax $\tau_0$ does not.

This distinction has a strong interpretation, namely that capabilities with $\tau_0 > 0$ are those whose competence is constitutively entangled with safety-relevant computation. Consider persuasive writing. The ability to write persuasively and the ability to write manipulatively may share representational structure not because of an accident of training, but because the underlying cognitive skills like modeling the reader's beliefs, selecting emotionally resonant framing, and anticipating objections are the same skills deployed toward different ends. If so, no amount of scaling or architectural innovation will decouple them, because the overlap is in the task structure, not the model.

Similarly, advanced chemistry knowledge may have irreducible overlap with the capacity to reason about synthesis of dangerous compounds, because the relevant knowledge is the same knowledge. The results give this intuition a mathematical meaning. $\gamma_i \neq 0$, the intrinsic overlap between the safety feature and the capability feature, is a property of the data distribution.

This contributes to the ``scaling solves alignment'' debate. Both the optimistic and pessimistic positions may be correct, about different capabilities. Scaling helps wherever the tax is incidental where safety and capability directions are close only because finite-dimensional packing forces unrelated features to share representational resources. Scaling does not help where the tax is intrinsic. We convert this from a philosophical disagreement into a per-task empirical question with a clear experimental protocol. We can measure $\tau_i(d)$ across a model family and check whether it converges.

\subsection{A Taxonomy of Alignment Difficulty}

The principal angle $\alpha$ between safety and capability subspaces induces a natural taxonomy of alignment problems:

\begin{enumerate}
    \item \textbf{Free regime} ($\alpha \approx \pi/2$, $\tau \approx 0$): Safety is orthogonal to capability. The full perturbation budget is available for safety improvement at negligible capability cost. Null-space methods (NSPO, null-space LoRA) operate here and succeed perhaps because most capability directions fall in this regime for current models.

    \item \textbf{Tradeoff regime} ($\alpha$ intermediate, $0 < \tau < 1$): Safety and capability partially overlap. There is a genuine tradeoff, but it is navigable. The Pareto frontier is an ellipse with room for joint improvement. The optimal strategy allocates perturbation partly along the capability subspace (accepting some cost) and partly in the orthogonal complement (achieving free safety). Model averaging and HMA operate here.

    \item \textbf{Entangled regime} ($\alpha \approx 0$, $\tau \approx 1$): Safety and capability point in nearly the same direction. Any safety gain comes at nearly one-to-one capability cost. This is the hard problem.
\end{enumerate}

Notice that the entangled regime ($\alpha \approx 0$) has a dual interpretation. If the safety direction and a dangerous capability point in the same direction, then removing the danger requires removing the capability. But the converse also holds that the capability cannot be added without activating the safety-relevant computation. In this regime, capable models are not just harder to align; they should also be harder to misuse in the entangled direction, because the same representations that enable the capability also trigger the model's safety-relevant features. Whether this is reassuring or alarming depends on the direction of misuse relative to the entangled subspace.

\subsection{Capability Constraints as Conflict Resolution}

Theorem~\ref{thm:conflict} contains a counterintuitive result that deserves elaboration beyond its formal statement. When two safety objectives project onto a capability direction with opposite signs, preserving that capability strictly improves the safety-safety Pareto frontier.

The intuition is clearest with a concrete example. Suppose harmlessness ($v_1^*$) and helpfulness ($v_2^*$) both have nonzero projections onto a ``reasoning'' capability direction, but with opposite signs. Improving reasoning (in the relevant sense) makes the model more helpful but less harmless, or vice versa. In the unconstrained setting, the two safety objectives conflict partly through the reasoning channel, where optimizing one pushes reasoning in a direction that hurts the other. Fixing the reasoning capability blocks this channel. The residual tradeoff, in the orthogonal complement of the reasoning direction, may be substantially easier. In this sense, you lose degrees of freedom but you lose the adversarial degrees of freedom, which is useful.

The practical implication could be a design principle that goes something like so: when facing a stubborn safety-safety tradeoff, identify capability directions with opposite-sign projections and constrain them. The formalization predicts exactly which capabilities to constrain (those satisfying $\mathrm{sgn}(\ip{c}{v_1^*}) \neq \mathrm{sgn}(\ip{c}{v_2^*})$) and quantifies the resulting improvement via the effective angle $\theta$.

More broadly, this result suggests that the common intuition that constraints only make optimization harder could be misleading in multi-objective settings. Capability constraints reduce the feasible set but also reduce the dimensionality of the conflict. When the conflict is partly mediated by the constrained directions, the net effect can be positive.

\section{Conclusion}
\label{sec:conclusion}

Under reasonable assumptions, alignment tax has a shape, and the shape is remarkably classical. It is an elliptic Pareto frontier parameterized by a single angle. The tradeoff surface that governs safety-capability tensions is a conic section, which is the same curve that arises from intersecting a sphere with a plane. The difficulty of alignment, in this framework, is not in the shape of the tradeoff but in measuring the angle that parameterizes it, which is an empirical problem rather than a theoretical barrier. This single geometric object, applied recursively, generates a theory of safety-capability tradeoffs with exact Pareto frontiers, computable tax rates, scaling laws, conflict theorems, and retroactive explanations of five independent empirical findings.

The practical implication is that the alignment tax may be predictable and analyzable. By measuring the safety direction and capability directions via probing before any alignment training is performed, one can perhaps compute the principal angles and predict which capabilities will be affected, by how much, and whether scaling helps. If the predictions from the formalization here hold up, it may help turn alignment from a trial-and-error process into a geometric optimization problem with known constraints.

\section*{Limitations}
\label{sec:limitations}

The result rests on the linear representation hypothesis. If safety and capability are encoded nonlinearly, the inner products $\ip{v^*}{c_i}$ do not capture the tradeoff structure. We note three mitigations. The hypothesis is empirically well-supported for binary safety concepts \citep{arditi2024refusal, turner2023activation, li2024inference}. The gradient interpretation of capability directions (Definition~\ref{def:capability}) ensures that $c_i$ captures the first-order response regardless of nonlinearity in $f_i$; our results characterize the local geometry accurately even if the global structure is nonlinear. Then, deviations from linearity appear as slack in our bounds and the predicted tradeoff may be seen as a lower bound on difficulty, not an exact characterization.

The formalization takes the safety direction $v^*$ as given and analyzes the cost of pursuing it. It says nothing about whether the right $v^*$ has been chosen. In practice, specifying what safety means, namely translating normative desiderata into a direction in representation space, is the harder problem. A model can be moved along $v^*$ at low cost, but if $v^*$ does not capture the actual safety property of interest, the result is optimizing the measurable proxy while missing the intended target. The alignment tax operates downstream of this specification problem and is silent on it.

Additionally, the result is local. The budget constraint $\norm{\delta} \leq B$ is a first-order approximation of the KL penalty. For large perturbations, the KL is not quadratic, the representation geometry shifts, and the safety and capability directions themselves may rotate. Our analysis characterizes the tradeoff surface near the base model. Empirical evidence that linear methods (model averaging, LoRA, NSPO) succeed suggests the linear regime extends further than worst-case analysis would predict, but the theory does not cover the global landscape.

Alignment perturbs weights $\theta$, inducing input-dependent representation shifts $\delta(x) = J(x)\Delta\theta$ where $J(x)$ is the Jacobian. We analyze a single representative $\delta$, which corresponds to the population average $\bar{\delta} = \E_x[\delta(x)]$ when capability metrics are expectations over the data distribution. This is appropriate for average-case analysis (benchmark evaluation) but not for worst-case analysis (adversarial robustness), where the input-dependence of $J(x)$ matters.

The scaling law (Theorem~\ref{thm:scaling}) assumes that incidental overlaps arise from feature packing with bounded coherence. Real models learn structured representations, not random packings. The qualitative prediction that some tasks have reducible tax and others do not is robust to model details, but the specific rate $O(m'/d)$ depends on the packing assumption. We frame the scaling law as a prediction conditioned on the packing hypothesis, with an experimental protocol for testing it as future work.

We model safety as a single direction or low-dimensional subspace. Fine-grained safety distinctions like whether this chemistry question about homework or synthesis may require higher-dimensional representations. The formalization accommodates arbitrary $k = \dim(S)$ but the scaling law's tractability benefits from $k$ being small.

Finally, the analysis is about average-case alignment, namely perturbations $\delta$ that work across the data distribution. Adversarial robustness (ensuring safety holds for worst-case inputs) involves the input-dependence of the Jacobian $J(x)$ and is not captured by the population-average analysis. The alignment tax for adversarial safety is likely higher, and its geometry is likely more complex, than what the current work describes.

\bibliography{custom}

\appendix
\newpage

\end{document}